\documentclass[submission]{eptcs} %,copyright,creativecommons]{eptcs}
 % Name of the event you are submitting to
\usepackage{breakurl}             % Not needed if you use pdflatex only.
\usepackage{underscore}           % Only needed if you use pdflatex.

\title{On the size of disjunctive formulas in the $\mu$-calculus}
\author{Clemens Kupke\thanks{Partially supported by Leverhulme grant RPG-2020-232.}
\institute{University of Strathclyde\\
Glasgow, Scotland}
\email{clemens.kupke@strath.ac.uk}
\and
Johannes Marti\thanks{The research of this author has been made possible by a grant from the Dutch Research Council NWO, project nr. 617.001.857.}
\quad\quad Yde Venema
\institute{ILLC, University of Amsterdam\\ 
Amsterdam, The Netherlands}
\email{johannes.marti@gmail.com \quad\quad y.venema@uva.nl}
%\institute{School of Computer Science and Engineering\\
%University of New South Wales\thanks{A fine university.}\\
%Sydney, Australia}
%\and
%Yde Venema
%\institute{ILLC, University of Amsterdam\\ 
%Amsterdam, The Netherlands}
%\institute{Stanford Univeristy\\
%California, USA}
%\email{\quad Y.Venema@uva.nl} %\quad\qquad somebody@else.org}
}

\usepackage{amssymb}
\usepackage{amsmath}
\usepackage{enumitem}
\usepackage{latexsym}
\usepackage{stmaryrd}
\usepackage{tikz-qtree}
  \usetikzlibrary{automata,positioning}

\newtheorem{theorem}{Theorem}[section]

\newtheorem{proposition}[theorem]{Proposition}

\newtheorem{corollary}[theorem]{Corollary}

\newtheorem{definition}[theorem]{Definition}

\newenvironment{proof}{\begin{trivlist}\item[\hskip\labelsep{\bf
Proof.\ }]}{\hspace*{\fill} {\sc qed}\end{trivlist}}
\newenvironment{proofof}[1]{\begin{trivlist}\item[\hskip\labelsep{\bf
Proof~of~{#1}.\ }]}{\hspace*{\fill} {\sc qed}\end{trivlist}}
% version of 2016 by YV
%preamble

\newenvironment{tbs}{%
   \small\tt
   \begin{itemize}}{\end{itemize}}
\newcommand{\btbs}{\begin{tbs}}                                                                      
\newcommand{\etbs}{\end{tbs}}                                                                      

%%%%%%%%%%%%%% STRUCTURES %%%%%%%%%%%%%% 

\newcommand{\mathstr}[1]{\mathbb{#1}}

%objects
\newcommand{\bbA}{\mathstr{A}}

\newcommand{\bbG}{\mathstr{G}}
\newcommand{\bbH}{\mathstr{H}}

\newcommand{\bbP}{\mathstr{P}}

\newcommand{\bbS}{\mathstr{S}}

\newcommand{\bbW}{\mathstr{W}}

% relating objects

%\newcommand{\injr}[1]{\stackrel{#1}{\rightarrowtail}}
%\newcommand{\injl}[1]{\stackrel{#1}{\leftarrowtail}}
%\newcommand{\suj}{\twoheadrightarrow}

%%%%%%%%%%%%%% CLASSES %%%%%%%%%%%%%% 

%%%%%%%%%%%%%%% SYNTAX %%%%%%%%%%%%%%

% LANGUAGES
%\newcommand{\Prop}{\ensuremath{\mathsf{Prop}}}        
%\newcommand{\Propvar}{\ensuremath{\mathsf{Prop}}}
\newcommand{\Prop}{\ensuremath{\mathsf{P}}}        
       %Act

\newcommand{\ML}{\ensuremath{\mathtt{ML}}}

\newcommand{\DML}{\ensuremath{\mathrm{DML}}}

\newcommand{\Lit}{\mathtt{Lit}}
\newcommand{\At}{\mathtt{At}}

\newcommand{\MLone}{\ensuremath{\mathtt{1ML}}} %1ML
\newcommand{\DMLone}{\ensuremath{\mathtt{1DML}}} %1ML_{D}

\newcommand{\AMLone}{\mathtt{1AML}}

\newcommand{\muML}{\ensuremath{\mu\ML}}    %muML
\newcommand{\muDML}{\ensuremath{\mu\DML}}    %muML

% \newcommand{\AH}[2]{\mathit{AH}^{#1}_{#2}}

% CONNECTIVES

% boolean symbols

\newcommand{\lneg}[1]{\ol{#1}}

\newcommand{\bv}{\bigvee}

%modal symbols
% use "\Box" for the modal box
\newcommand{\dia}{\Diamond}
% use the following for indexed diamonds as in PDL

%\newcommand{\mop}[1]{\la #1 \ra}
%\newcommand{\dmop}[1]{[#1]}

\newcommand{\nb}{\nabla}

\newcommand{\hs}{\heartsuit}

% FORMULAS AND STUFF

\newcommand{\Sfor}{\ensuremath{\mathit{Sfor}}}

% size measures

 % subformula-based size
    % subformula-based size mod alpha
 % closure-based size
    % closure-based size mod alpha
\newcommand{\asz}[1]{|#1|^{s}} % subformula-based size
 % closure-based size

% alternation depth

% VARIOUS SYNTAX-RELATED
\renewcommand{\phi}{\varphi} % nicer \phi
\newcommand{\isbnf}{\mathrel{::=}}
\newcommand{\divbnf}{\;\mid\;}
% \newcommand{\is}{\approx}
% \newcommand{\issm}{\preceq}

%\newcommand{\eqalc}[1]{\llparenthesis #1 \rrparenthesis_\al}

% \newcommand{\pol}[1]{\ol{#1}}

% translations

%%%%%%%%%%%%%%  SEMANTICS  %%%%%%%%%%%%%% 

%\newcommand{\waar}{\Vdash}
%\newcommand{\forces}{\Vdash}
%\newcommand{\gwaar}{\forces_{g}}

\newcommand{\V}{\mathrm{Val}}
%%%%%%%%%%%%%% COALGEBRA %%%%%%%%%%%%%% 

%Categories

%Functors

   \newcommand{\pow}{\wp}

%\newcommand{\opp}[1]{#1^{\mathit{op}}}

%%%%%%%%%%%%%%  AUTOMATA   %%%%%%%%%%%%%% 

\newcommand{\Acc}{\mathit{Acc}}
\newcommand{\Lom}{L_{\om}}

\newcommand{\NBT}{\mathrm{NBT}}

% \newcommand{\bbAun}{\bbA_{\cup}}
% \newcommand{\bbAit}{\bbA_{\cap}}
% \newcommand{\bbAneg}{\autneg{\bbA}}
% \newcommand{\Deun}{\De_{\cup}}
% \newcommand{\Deit}{\De_{\cap}}
% \newcommand{\Deneg}{\De_{\sim}}

%Modal Automata

\newcommand{\idx}{\mathit{ind}}

% \newcommand{\wt}[1]{\mathsf{w}(#1)}

%%%%%%%%%%%%%%  AUTOMATA   %%%%%%%%%%%%%% 

%%%%%%%%%%%%%%   GAMES     %%%%%%%%%%%%%% 

\newcommand{\AG}{\mathcal{A}}

\newcommand{\EG}{\mathcal{E}}

\newcommand{\eloi}{\exists}
\newcommand{\abel}{\forall}

\newcommand{\Win}{\mathrm{Win}}

\newcommand{\LS}{\mathit{LS}}

%%%%%%%%%%%%%%   GAMES     %%%%%%%%%%%%%% 

%%%%%%%%%%%%%% MATHEMATICS %%%%%%%%%%%%%% 

%various

\newcommand{\nada}{\varnothing}

\newcommand{\sse}{\subseteq}

\newcommand{\Diag}{\Delta}

\newcommand{\size}[1]{|#1|}
\newcommand{\first}{\mathit{first}}
\newcommand{\last}{\mathit{last}}

\newcommand{\isdef}{\mathrel{:=}}

% relations & functions

\newcommand{\Dom}{\mathsf{Dom}}
\newcommand{\Ran}{\mathsf{Ran}}

\newcommand{\parto}{\stackrel{\circ}{\to}}

%%%%%%%%%%%%%% VARIOUS ABBREVIATIONS %%%%%%%%%%%%%% 

%Greek letters

\newcommand{\De}{\Delta}
\newcommand{\Th}{\Theta}
\newcommand{\Si}{\Sigma}
\newcommand{\Om}{\Omega}
\newcommand{\al}{\alpha}

\newcommand{\si}{\sigma}
\newcommand{\om}{\omega}

\newcommand{\coloneqq}{\mathrel{:=}}

%various

\newcommand{\ol}[1]{\overline{#1}}

\newcommand{\Ord}{\mathcal{O}}

%Fixed points
           %POS(phi)
           %POS(phi)
           %FIX(phi)
             %LFP.phi
             %GFP.phi
%\newcommand{\alfp}[2]{\ensuremath{\mathrm{LFP}^{#1}.}#2}     %LFP.phi
%\newcommand{\agfp}[2]{\ensuremath{\mathrm{GFP}^{#1}.}#2}     %LFP.phi
% \newcommand{\alfp}[2]{\ensuremath{{#2}^{#1}_{\mu}}}     %LFP.phi
% \newcommand{\agfp}[2]{\ensuremath{{#2}^{#1}_{\nu}}}    %LFP.phi

\newcommand{\wt}[1]{\widetilde{#1}}

\newcommand{\comp}{\mathrel{;}}

% correct bad hyphenation here
\hyphenation{op-tical net-works semi-conduc-tor}

\begin{document}
\maketitle

\begin{abstract}
A key result in the theory of the modal $\mu$-calculus is the
disjunctive normal form theorem by Janin \& Walukiewicz, stating that
every $\mu$-calculus formula is semantically equivalent to a so-called
disjunctive formula. These disjunctive formulas have good computational
properties and play a pivotal role in the theory of the modal
$\mu$-calculus. It is therefore an interesting question what the best normalisation procedure is for rewriting a formula into an equivalent disjunctive formula of minimal size.
The best constructions that are known from the literature are automata-theoretic in nature and consist of a guarded transformation, i.e., the constructing of an equivalent guarded alternating automaton from a $\mu$-calculus formula, followed by a Simulation Theorem stating that any such alternating automaton can be transformed into an equivalent non-deterministic one. Both of these transformations are exponential constructions, making the best normalisation procedure doubly exponential. Our key contribution presented here shows that the two parts of the normalisation procedure can be integrated, leading to a procedure that is single-exponential in the closure size of the formula. 
\end{abstract}

\section{Introduction}

The modal $\mu$-calculus~\cite{brad:moda06} is a general modal logic enriched with fixpoint operators that allow
to reason about the ongoing, possibly infinite behaviour of a system. 
The generality and complexity of the modal $\mu$-calculus calls for research into 
fragments of the logic.
On the one hand, this concerns fragments tailor-made for certain application 
domains such as the temporal logics LTL, CTL or dynamic logics~\cite{stir:moda01}. On the other hand, one focuses on fragments of the $\mu$-calculus that are either semantically or syntactically well-behaved and where a better understanding increases our knowledge about the full $\mu$-calculus. One key fragment of the latter kind is formed by the so-called disjunctive formulas~\cite{jani:auto95} . These are formulas, where the use of conjunctions is strictly limited to conjunctions of propositional atoms and the formula $\top$ (thought of as the empty conjunction). 
We state the formal definition here - the meaning of the $\nb$-operator will be discussed later.

% \begin{definition}
% Let $\Prop$ be a set of proposition letters and let $\Lit(\Prop) \isdef \{ p, 
% \lneg{p} \mid p \in \Prop \}$ denote the set of \emph{literals} over $\Prop$,
% i.e., propositional variables and their negation.
% The set $\muDML$ of disjunctive $\mu$-calculus formulas is defined inductively 
% by the following grammar:
% \[ \phi \isbnf \perp \mid \top 
%    \mid \bigwedge L \land \nb B, \; L \subseteq \Lit(\Prop), B \subseteq_\omega \muDML %, B \mbox{ finite} 
%    \mid (\phi_1 \lor \phi_2) \mid \mu p. \varphi  \mid \nu p. \varphi
% \]
% where 
% $p$ is a propositional variable and $\subseteq_\omega$ denotes the finite subset relation. Furthermore we require that
% in a formula $\eta p.\phi$ all occurrences of $p$ in $\phi$ are positive, 
% guarded and not in the context of a conjunction $p \land \psi$.
% \end{definition}

\begin{definition}
The set $\muDML$ of disjunctive $\mu$-calculus formulas is given by the following
grammar:
\[ \phi \isbnf \bot \mid \top 
   \mid \bigwedge L \land \nb \Phi %, \; L \subseteq \Lit(\Prop), B \subseteq_\omega \muDML %, B \mbox{ finite} 
   \mid (\phi_1 \lor \phi_2) \mid \mu p. \varphi  \mid \nu p. \varphi
\]
where $L$ is a finite set of literals (i.e., propositional variables or their 
negations),
$\Phi$ is a finite set of formulas in $\muDML$, 
%$\Prop$ is a set of proposition letters, $\Lit(\Prop) \isdef \{ p, 
%\lneg{p} \mid p \in \Prop \}$ denotes the set of \emph{literals} over $\Prop$,
%i.e., propositional variables and their negation, 
and $p$ is a propositional variable. 
Furthermore we require that in a formula $\eta p.\phi$ all occurrences of $p$ in
$\phi$ are positive, guarded and not in the context of a conjunction $p \land 
\psi$.
\end{definition}

While disjunctive formulas correspond to a proper syntactic fragment of the
$\mu$-calculus, it is a somewhat surprising fact that each formula of the 
$\mu$-calculus is semantically equivalent to a disjunctive one. 
This has many applications, e.g., satisfiability checking of a disjunctive 
formula can be carried out efficiently~\cite{jani:auto95} (being 
ExpTime-complete in for arbitrary formulas~\cite{emer:comp88,emer:tree91}) 
and disjunctive formulas facilitate the computation of uniform
interpolants~\cite{dago:logi00,dale:onmo06}. 
Furthermore, disjunctive formulas play a pivotal role in the completeness
theory of the modal $\mu$-calculus~\cite{walu:comp00, enqv:comp18}. 
Finally, disjunctive formulas also provide insights for characterising important
semantic fragments such as the continuous, additive and monotone fragments of
the modal $\mu$-calculus~\cite{font:mode18}.

Recipes to rewrite a given arbitrary $\mu$-calculus formula into an equivalent disjunctive one are well-known~\cite{jani:auto95}.
The size of the resulting disjunctive formula is crucial, in particular, in connection with satisfiability and 
uniform interpolation. 
The construction of a disjunctive formula usually proceeds in two stages: first a
given $\mu$-calculus formula is transformed into an equivalent, possibly 
alternating modal automaton, which is then transformed into an equivalent 
non-deterministic ``disjunctive'' modal automaton. 
The latter can be easily translated back into a disjunctive formula. We will argue that the outlined two-stage construction will inevitably lead to a double-exponential blow-up in the size of the formula. This is, because the first move from formulas to automata involves bringing the formula into a guarded format, i.e.,  a form where each fixpoint variable is in the scope of at least one modality. That guarding is problematic has been observed in the work by Bruse et~al.~\cite{brus:guar15} - we will argue that it necessarily entails an exponential blow-up in the size of the structures involved. 

This sets the stage for our main result: a procedure that directly turns an arbitrary, possibly unguarded formula in the modal $\mu$-calculus into an equivalent disjunctive automaton of exponential size. The latter can be turned easily into a disjunctive formula, which leads to our main theorem.

\begin{theorem}\label{thm:main}
  %  Let $\Prop$ be a finite set of propositional variables of size $m$. 	
For any $\mu$-calculus formula $\varphi$ we can construct an equivalent 
disjunctive $\mu$-calulus formula $\phi^d$ of size $2^{\Ord(n^2k\cdot \log(nk))}$
and alternation depth $\Ord(n\cdot k)$, where $n = \size{\varphi}$ and where $k$ 
is the alternation depth of $\varphi$.
\end{theorem}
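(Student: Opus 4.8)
The plan is to bypass the guardedness transformation entirely and to merge the two classical stages into one construction performed directly on the closure of $\varphi$; guardedness of the \emph{output} automaton will then come essentially for free.

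\textbf{Step 1: the closure automaton.} I would first associate to $\varphi$ an alternating parity automaton $\mathbb{A}$ whose state space is the closure $\Clos(\varphi)$, of size $\Ord(n)$. Its transition map is the obvious one-step unfolding: Boolean connectives and fixpoint unfoldings induce \emph{silent} ($\eps$-)transitions, a modality $\dia\psi$ or $\Box\psi$ induces a genuine modal move, and the priority of a closure-state is determined by whether it is a $\mu$- or $\nu$-formula together with its nesting/alternation level, with more significant fixpoints getting higher priority. The only point that needs care is the acceptance of a play of the acceptance game performing only finitely many modal moves: the chosen priority assignment makes such a play won by $\exists$ exactly when the most significant fixpoint formula unfolded infinitely often is a $\nu$-formula, which is the semantically correct verdict even though $\mathbb{A}$ is unguarded. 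A routine adequacy argument via the evaluation/closure game then gives $\mathbb{A}\equiv\varphi$ with no guardedness assumption.

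\textbf{Step 2: the integrated simulation.} Next I would run a Simulation-Theorem-style powerset construction on $\mathbb{A}$, but one tailored to absorb the $\eps$-transitions. A macrostate is, roughly, a binary relation on $\Clos(\varphi)$ refined with priority bookkeeping: for each relevant pair of closure-formulas it records the \emph{optimal} priority realisable along an $\eps$-path between them — a finite computation in the ``flat'' $\eps$-graph, i.e.\ solving the $\eps$-only parity game — together with a latest-appearance record over the priorities that is needed to resolve the universal player's choices into existential ones as in the classical construction. Because the $\eps$-closure is computed \emph{before} committing to a modal move, the resulting automaton $\mathbb{B}$ has only genuine modal transitions and is therefore guarded by construction; arranging these transitions in the usual $\nb$-shape makes $\mathbb{B}$ a disjunctive (non-deterministic) modal automaton. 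A counting argument bounds the number of macrostates by $2^{\Ord(n^2 k\log(nk))}$ — the $\Ord(n^2)$ pairs each carrying $\Ord(k\log(nk))$ bits of priority data — while the latest-appearance bookkeeping over the blown-up priority range forces the index of $\mathbb{B}$, hence the alternation depth of the final formula, up to $\Ord(nk)$.

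\textbf{Step 3: back to a disjunctive formula.} Since $\mathbb{B}$ is a guarded disjunctive modal automaton, it translates in the standard, size-preserving way into a formula $\phi^d\in\muDML$: each macrostate becomes a fixpoint variable whose defining equation is the disjunction of the $\nb$-formulas (conjoined with literals) labelling its outgoing transitions, and the $\mu/\nu$ types are read off from the priorities of $\mathbb{B}$. Hence the closure size of $\phi^d$ is polynomial in $|\mathbb{B}|$, i.e.\ $2^{\Ord(n^2 k\log(nk))}$, and its alternation depth equals the index of $\mathbb{B}$, namely $\Ord(nk)$. Chaining $\varphi\equiv\mathbb{A}\equiv\mathbb{B}\equiv\phi^d$ gives the theorem. \emph{Main obstacle.} The hard part is the correctness of Step 2, specifically the direction from $\mathbb{B}$ back to $\mathbb{A}$: one must show that the finite, boundedly-annotated macrostates retain \emph{enough} information to reconstruct a winning strategy in the acceptance game of the \emph{unguarded} automaton $\mathbb{A}$ from a winning strategy in the acceptance game of $\mathbb{B}$. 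In the guarded case the Simulation Theorem leans on the fact that progress is made at every modal step; here two modal moves may be separated by an arbitrarily long $\eps$-segment, so the crux is a ``no information loss'' lemma showing that the recorded optimal-$\eps$-path priorities, together with the latest-appearance bookkeeping, determine the parity of the limit trace — after which the usual strategy-transfer and König's-lemma arguments go through. The converse direction (from $\mathbb{A}$ to $\mathbb{B}$) is the more routine ``play all residuals simultaneously'' construction, adapted so that the simultaneous play ranges over $\eps$-closures rather than over single states.
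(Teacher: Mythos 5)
Your proposal follows essentially the same route as the paper: replace guarding by an integrated powerset construction on a closure-based, possibly unguarded automaton (the paper's parity formula), take as macrostates priority-annotated relations over the closure that absorb the $\epsilon$/unguarded segments (the paper's stationary plays under a guessed local strategy) before each modal $\nb$-step, and translate the resulting disjunctive automaton back to a formula, with the same size and index accounting; the ``no information loss'' obstacle you isolate is exactly what the paper's equivalence argument (Proposition~\ref{p:simeq}) establishes. The only real divergence is bookkeeping: the paper keeps, per pair of vertices, \emph{all} achievable segment-maximal priorities (ternary macrostates in $\pow(V\times\Ran(\Om)\times V)$) and defers the determinization of the resulting no-bad-trace condition to a separate deterministic parity word automaton attached by a wreath product (Proposition~\ref{prop:parity}, Corollary~\ref{cor:parity}), whereas you compress to a single ``optimal'' priority per pair plus an embedded latest-appearance record --- workable only if ``optimal'' means the reward-worst realizable priority and this choice composes under concatenation, an extra soundness claim the paper's construction never has to make.
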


In the above theorem, the size of a formula refers to the size of its Fisher-Ladner closure, which has been shown in~\cite{brus:guar15} to provide the tightest measure of formula size.
For a discussion and comparison of different size measures 
see~\cite{kupk:size20} where we also propose
so-called ``parity formulas'' as a versatile tool to study the complexity of formula constructions. Parity formulas are a graph-shaped variant of $\mu$-calculus formulas which is closely related to Wilke's alternating automata~\cite{wilk:alte01} and hierarchical equation systems~\cite{demr:temp16}. 
While we stated the above theorem with reference to standard formulas, we will work throughout the paper with 
parity formulas instead. At the same time we will explain why Thm.~\ref{thm:main} is a consequence of our work.  

The outline of our paper is thus as follows: we will first introduce the necessary terminology for parity formulas, modal automata and their respective disjunctive variants. After that, in Section~3, we will discuss why guarding a parity formula can lead to an exponential blow-up. We then demonstrate that turning an arbitrary formula into an equivalent modal automaton is at least as costly as guarding a formula which means that the earlier mentioned two-stage method of constructing a disjunctive formula will in general lead to a double-exponential blow-up. Section 5 contains the central result of this paper, a construction that turns any given parity formula into an equivalent disjunctive modal automaton. This will provide a proof of Thm.~\ref{thm:main}.

{\em Related Work.} 
In addition to the already mentioned papers we would like to highlight a few more
closely related lines of research. 
In spirit, our construction is related to the work by Friedmann \& Lange on 
tableaux for unguarded $\mu$-calculus formulas~\cite{frla13:deci} but the cited 
paper is not concerned with disjunctive normal forms.  
Similarly, our automata theoretic result could be obtainable from a more general 
result in~\cite{vard98:reas} - a key definition in that paper, however, appears
to make an implicit assumption on the names of fixpoint variables (``cleanness'') 
whereas our results from~\cite{kupk:size20} demonstrate that cleaning a formula 
can lead to an exponential blow-up in (closure) size. 
In addition, it is not clear how to extract a disjunctive formula from the purely 
automata-theoretic constructions in~\cite{vard98:reas}. Finally, the work by 
Lehtinen~\cite{leht:disj15} studies how the alternation depth of a formula 
relates to the alternation depth of an equivalent disjunctive formula. 
While it turns out that the difference in alternation depth can be arbitrarily
big, we note that this is not in conflict with the bound in our theorem, as we
refer to a particular disjunctive equivalent as opposed to an arbitrary one.

{\em Acknowledgements.}
We would like to thank the anonymous referees for valuable comments that helped to improve this paper.

\section{Preliminaries}

\subsection{The $\mu$-calculus and parity formulas}

We will now recall the standard syntax of the modal $\mu$-calculus and its 
reformulation in terms of parity formulas. 
It will be convenient for us to assume that $\mu$-calculus formulas are in
so-called negation normal form.
We assume an infinite set of propositional variables, and define a \emph{literal} 
to be either a propositional variable $p$ or its negation $\lneg{p}$.

\begin{definition}
\label{d:syntnnf}
The formulas of the modal $\mu$-calculus $\muML$ are given by the following
grammar:
\begin{eqnarray*}
\phi & \isbnf & 
   \ell
   \divbnf \bot \divbnf \top
   \divbnf (\phi_{1}\lor\phi_{2}) \divbnf (\phi_{1}\land\phi_{2}) 
   \divbnf
   \dia\phi \divbnf \Box\phi 
   \divbnf \mu p. \phi \divbnf \nu p. \phi,
\end{eqnarray*}
where $\ell$ is a literal, $p$ is a propositional variable, and  the formation of
the formulas $\mu p. \phi$ and $\nu p. \phi$ is subject to the constraint that 
$\phi$ is \emph{positive} in $p$, i.e., there are no occurrences of $\lneg{p}$
in $\phi$.
With $\size{\varphi}$ we denote the size of a formula measure in the number
of distinct formulas in its Fisher-Ladner closure. 
\end{definition}

% 
% Elements of $\muML$ will be called \emph{$\mu$-calculus formulas} or simply
% \emph{formulas}.
% Formulas of the form $\mu x. \phi$ or $\nu x. \phi$ will be called 
% \emph{fixpoint formulas}.
% 
% We write $\BV{\xi}$ and $\FV{\xi}$ for, respectively, the set of \emph{bound}
% and \emph{free variables} of a formula $\xi$, and  we let $\muML(\Prop)$ denote
% the set of $\mu$-formulas of which all free variables belong to the set $\Prop$.
% A variable $x$ is \emph{fresh} for a formula $\xi$ if $x \not\in \BV{\xi} \cup 
% \FV{\xi}$.
% 
% 
% We define $\Lit(\Prop) \isdef \{ p, \lneg{p} \mid p \in \Prop \}$ as the set of 
% \emph{literals} over $\Prop$, and $\At(\Prop) \isdef \{ \bot, \top \} \cup 
% \Lit(\Prop)$ as the set of \emph{atomic formulas} over $\Prop$.
% 
% The \emph{fixpoint depth} $\fdep{\phi}$ of a formula $\phi$ is defined as 
% follows:
% $\fdep{\phi} \isdef 0$ if $\phi$ is atomic, 
% $\fdep{\phi_{0} \odot \phi_{1}} \isdef \max(\fdep{\phi_{0}},\fdep{\phi_{1}})$,
% $\fdep{\hs\phi} \isdef \fdep{\phi}$, and 
% $\fdep{\eta x. \phi} \isdef 1 + \fdep{\phi}$.
% The \emph{modal depth} of a formula is defined similarly.
We often restrict attention to formulas of which the free variables belong to 
some fixed finite set $\Prop$; these are interpreted over Kripke models over 
$\Prop$ (in the following referred to as models), i.e., triples $\bbS = (S,R,\V)$
where $S$ is a set of points, $R$ is a binary relation and $\V: \Prop \to \pow S$. 
We sometimes refer to the propositional type $c_{s} \isdef \{ p \in \Prop \mid s
\in \V(p) \}$ of $s \in S$ as the \emph{colour} of $s$.
A {\em pointed model} is a model $\bbS$ together with a designated point 
$s_I \in S$. 
Finally, it will be convenient to extend $\V$ to all literals by putting 
$\V(\lneg{p}) = S \setminus \V(p)$.

In this paper, we will not work with $\mu$-calculus formulas in their usual 
shape, but with formulas represented as graphs, so-called ``parity formulas''. 
Parity formulas will facilitate discussing the complexity of our constructions. 
In addition, the fact that parity formulas resemble automata will simplify our 
proofs, as key constructions in our paper turn formulas into automata and vice
versa. 
While parity formulas were introduced in~\cite{kupk:size20} they are closely 
related to alternating automata~\cite{wilk:alte01} and hierarchical equation 
systems, see for instance~\cite{demr:temp16}. 
A detailed discussion of the connections can be found 
in~\cite[Section~5]{kupk:size20}. 
Before giving the definition it will be useful to fix some terminology for 
directed graphs (which we will also apply to structures such as parity formulas 
that possess a directed graph structure). For binary relations $R \subseteq X \times X$
and $x \in X$ we will use the notation $R[x]$ to denote the set $\{x' \in X \mid (x,x') \in R \}$.

\begin{definition}
  Let $(V,E)$ be a directed graph. A path $\pi$ through $(V,E)$
  is a finite, non-empty sequence $\pi = v_0 \dots v_n \in V^*$ such that
  $v_{i+1} \in E [v_i]$ for all $i \in \{ 0,\dots,n-1\}$.
  We denote by $\first(\pi)$ and $\last(\pi)$ the first and last element
  of the path $\pi$, respectively. Concretely, for the above path 
  we have $\first(\pi) = v_0$ and $\last(\pi) = v_n$. 
  %If the path consists 
  %of one element only, we have $\first(\pi) = \last(\pi)$.
  A path $\pi$ with $\first(\pi) = \last(\pi)$ is called a cycle if it consists of at least two nodes.  
\end{definition}

Given the set $\Prop$ of proposition letters, we let $\Lit(\Prop)$ and 
$\At(\Prop) \isdef \Lit(\Prop) \cup \{ \bot,\top \}$ denote the set of 
\emph{literals} and \emph{atomic formulas over} $\Prop$, respectively.

\begin{definition}
\label{d:pf}
Let $\Prop$ be a finite set of proposition letters.
A \emph{parity formula over $\Prop$} is a quintuple $\bbG = (V,E,L,\Om,v_{I})$, 
where
\begin{itemize}
 \item $(V,E)$ is a finite, directed graph, with $\size{E[v]} \leq 2$ for 
every vertex $v$;
 \item $L: V \to \At(\Prop) 
      \cup \{ \land, \lor, \dia, \Box, \epsilon \}$ is a labelling function;
 \item $\Om: V \parto \om$ is a partial map, the \emph{priority} map of $\bbG$; and 
 \item $v_{I}$ is a vertex in $V$, referred to as the \emph{initial} node of $\bbG$;
\end{itemize}
\noindent
such that 
\begin{enumerate}
 \item $\size{E[v]} = 0$ if $L(v) \in \At(\Prop)$, and 
   $\size{E[v]} = 1$ if $L(v) \in \{ \dia, \Box\} \cup \{ \epsilon \}$;
 \item every cycle of $(V,E)$ contains at least one node in $\Dom(\Om)$.
\end{enumerate}
% JM: Do we need these notions?
% \noindent
A node $v \in V$ is called 
%\emph{silent} if $L(v) = \epsilon$,
% \emph{constant} if $L(v) \in \{ \top,\bot\}$,
% \emph{literal} if $L(v) \in \Lit(\Prop)$, 
 \emph{atomic} if it is either constant or literal,
 \emph{boolean} if $L(v) \in \{ \land, \lor \}$, and
\emph{modal} if $L(v) \in \{ \dia, \Box \}$.
We denote by $V_a$, $V_b$ and $V_m$ the
collections of atomic, boolean and modal nodes, respectively.
The elements of $\Dom(\Om)$ will be called \emph{states}.
The \emph{size} of a parity formula  $\bbG = (V,E,L,\Om,v_{I})$ is defined as
its number of nodes: $\size{\bbG} \isdef \size{V}$.
% We say that a proposition letter $q$ \emph{occurs} in $\bbG$ if $L(v) 
% \in \{ q, \ol{q} \}$ for some $v \in V$.
% A parity formula $\bbG = (V,E,L,\Om,v_{I})$ is \emph{$\epsilon$-free} if 
% $L^{-1}(\epsilon) = \nada$.
\end{definition}

\begin{definition}
\label{d:pfgam}
Let $\bbS=(S,R,\V)$ be a Kripke model for a set $\Prop$ of proposition letters, 
and let $\bbG = (V,E,L,\Om,v_{I})$ be a parity formula over $\Prop$.
We define the \emph{evaluation game} $\EG(\bbG,\bbS)$ as the parity game 
$(G,E,\Om')$ of which 
the board consists of the set $V \times S$, 
the priority map $\Om': V \times S \to \om$ is given by
\[
\Om'(v,s) \isdef \left\{ \begin{array}{ll}
      \Om(v) & \text{if } v \in \Dom(\Om)
   \\ 0      & \text{otherwise},
\end{array}\right.
\]
and the game graph is given in Table~\ref{tb:2}. 
Here all possible game positions are listed in the left column, the owner of a position is either $\forall$ or $\exists$\footnote{Note that we do not need to assign a player to positions that admit a single 
move only.}
 and the set of possible moves is specified in the right column. 
 %A {\em complete play} of the game is either an 
% infinite sequence of positions according the above rules, or a finite such sequence where no move is possible
% at the last position.  
As usual, finite plays of the game are lost by the player who owns the last 
position of the play from which no more move is possible (``the player who gets
stuck loses''). 
An infinite play is won by $\exists$ if the maximum priority occurring infinitely 
often along the play is even, and by $\forall$ if it is odd.

The parity formula $\bbG$ \emph{holds} at a point $s$ if the pair $(v_{I},s)$ is 
winning for $\exists$ in the evaluation game.
\end{definition}

\begin{table}[t]
\begin{center}
\begin{tabular}{|ll|c|c|}
\hline
\multicolumn{2}{|l|}{Position} & Player  & Admissible moves 
\\\hline
     $(v,s)$ & with $L(v) = l$ and $s \in \V(l)$         
   & $\abel$ & $\nada$ 
\\   $(v,s)$ & with $L(v) = l$ and $s \notin \V(l)$      
   & $\eloi$ & $\nada$ 
% \\   $(v,s)$ & with $L(v) = \lneg{p}$ and $s \in \V(p)$    
%    & $\eloi$ & $\nada$ 
% \\   $(v,s)$ & with $L(v) = \lneg{p}$ and $s \notin \V(p)$ 
%    & $\abel$ & $\nada$ 
\\   $(v,s)$ & with $L(v) = \epsilon$ 
   & - & $E[v] \times \{ s \}$ 
\\   $(v,s)$ & with $L(v) = \lor$ 
   & $\eloi$ & $E[v] \times \{ s \}$ 
\\   $(v,s)$ & with $L(v) = \land$ 
   & $\abel$ & $E[v] \times \{ s \}$ 
\\   $(v,s)$ & with $L(v) = \dia$ 
   & $\eloi$ & $E[v] \times R[s]$ 
\\   $(v,s)$ & with $L(v) = \Box$ 
   & $\abel$ & $E[v] \times R[s]$ 
\\ \hline
\end{tabular}
\end{center}
\caption{The evaluation game $\EG(\bbG,\bbS)$.}
\label{tb:2}
\end{table}

A central complexity measure for both parity formulas and modal automata will be the so-called {\em index}.
We define the index of a parity formula as the size of the range of its priority function $\Om$.  We will rely on the following result
from~\cite{kupk:size20} that ensures that throughout the paper we are able to 
work on parity formulas instead of formulas in standard syntax.

\begin{proposition}\label{prop:yeswecan}
    There is an algorithm that constructs for any formula $\varphi \in \muML$ an equivalent parity formula $\bbG_\varphi$ 
    such that $\size{\bbG_\phi} = \size{\varphi}$ and such that the index of $\bbG_\phi$ is smaller or equal to
    the alternation depth of $\varphi$. 
Conversely, there is an algorithm that constructs for a given parity formula 
$\bbG$ an equivalent formula $\varphi_\bbG \in \muML$ such that 
$\size{\varphi_\bbG} \leq 2 \cdot \size{\bbG}$ and such that the alternation 
depth of $\varphi_\bbG$ is smaller or equal to the index of $\bbG$.
\end{proposition}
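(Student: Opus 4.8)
\emph{Proof idea.} We treat the two directions separately; in both we build the required object by hand and then establish semantic equivalence by comparing evaluation games.

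\emph{From formulas to parity formulas.} After an $\alpha$-renaming, which leaves the Fisher--Ladner closure unchanged, we may assume that the bound variables of $\varphi$ are pairwise distinct and disjoint from its free variables. We let the vertex set of $\bbG_\varphi$ be the closure $\Clos(\varphi)$ itself, with $v_I := \varphi$, and we label each $\psi \in \Clos(\varphi)$ according to its main connective: an atomic formula is its own label and has no successor; $\psi_0 \lor \psi_1$ and $\psi_0 \land \psi_1$ get the label $\lor$, resp.\ $\land$, with successors $\psi_0,\psi_1$; a formula $\dia\psi_0$ or $\Box\psi_0$ gets the corresponding modal label with successor $\psi_0$; and a fixpoint formula $\eta p.\chi$ gets the label $\epsilon$ with unique successor the unfolding $\chi[\eta p.\chi/p] \in \Clos(\varphi)$. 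The states are declared to be the fixpoint formulas in $\Clos(\varphi)$. The degree constraints of Definition~\ref{d:pf} are immediate, and every cycle of this graph must traverse a fixpoint-unfolding edge (the other edges strictly decrease a natural rank on $\Clos(\varphi)$, and atomic vertices are sinks), hence contains a state; so, once a priority map $\Om$ is defined on all states, $\bbG_\varphi$ is a parity formula. For $\Om$ we invoke the standard correspondence between alternation depth and parity index: the fixpoint subformulas of $\varphi$ carry a priority assignment that is parity-correct (even values for $\nu$, odd for $\mu$), consistent with the nesting of fixpoints (an outer fixpoint receives a priority no smaller than a nested one), and uses at most $\adep{\varphi}$ distinct values. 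This yields a parity formula of index at most $\adep{\varphi}$, while $\size{\bbG_\varphi} = \size{\varphi}$ by construction. Equivalence follows from the evident correspondence between positions $(\psi,s)$ of the standard evaluation game of $\varphi$ on a model and positions of $\EG(\bbG_\varphi,\bbS)$; the one delicate point is the winning condition, where one checks that a match eventually caught inside a cluster of fixpoint vertices regenerates exactly the fixpoint subformulas that are unfolded infinitely often in the standard game, with the same dominating parity --- which is exactly what consistency of $\Om$ with the nesting guarantees.

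\emph{From parity formulas to formulas.} Let $\bbG = (V,E,L,\Om,v_I)$ be given. Enumerate its states as $v_1,\dots,v_m$ with $\Om(v_1)\ge\dots\ge\Om(v_m)$ (ties broken arbitrarily) and take fresh variables $x_1,\dots,x_m$. We read off a system of equations, one per state: the body $\beta_i$ of the equation for $x_i$ is the term over $x_1,\dots,x_m$ and the atoms obtained by expanding $\bbG$ from the successor(s) of $v_i$ through the boolean, modal and $\epsilon$-labelled non-state vertices, stopping --- with the variable $x_j$ --- at each state $v_j$ and --- with the corresponding atom --- at each atomic vertex; this terminates since, by condition~(2) of Definition~\ref{d:pf}, every state-avoiding path is acyclic. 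Making the $i$-th block a $\nu$-block if $\Om(v_i)$ is even and a $\mu$-block otherwise, and solving the system from the innermost block ($i=m$) outwards (as in the usual passage from hierarchical equation systems to fixpoint terms), yields solution formulas $\psi_1,\dots,\psi_m$; we obtain $\varphi_\bbG$ by substituting these into the analogous expansion of $v_I$. As $\bbG$ has no negations and each $x_i$ occurs in $\beta_i$ only through back edges, every $x_i$ is positive in its body, so $\varphi_\bbG \in \muML$. The only fixpoint subformulas of $\varphi_\bbG$ are substitution instances of the $\psi_i$, and whenever one of these is a proper subformula of another its index is larger, hence the priority of its state is no larger; since two nested such subformulas can alternate only when the corresponding priorities differ in parity, any alternating chain of dependent fixpoint subformulas runs through strictly decreasing, hence pairwise distinct, priorities, so the alternation depth of $\varphi_\bbG$ is at most the index of $\bbG$. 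A routine analysis of the Fisher--Ladner closure of $\varphi_\bbG$ shows that each vertex of $\bbG$ contributes at most two of its elements --- its expansion and, for a state, additionally the unfolding of the associated solution --- so that $\size{\varphi_\bbG} \le 2\cdot\size{\bbG}$. Finally, equivalence holds because the solved equation system is just $\bbG$ unravelled into a tree with back edges: this is where the ordering $\Om(v_1)\ge\dots\ge\Om(v_m)$ is used, since an infinite match of $\EG(\bbG,\bbS)$ eventually cycles through a set $C$ of vertices with $C\cap\Dom(\Om)\neq\nada$, and that ordering makes the outermost fixpoint variable regenerated infinitely often in the corresponding match of the formula game be $x_i$ for some $v_i\in C$ with $\Om(v_i)$ maximal in $C$, so the two matches have the same winner.

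The combinatorial constructions and the size/index bookkeeping are routine; the actual content lies in the two evaluation-game equivalences, and in particular in the point --- essential for the second direction --- that binding the fixpoint variables in order of decreasing priority of their states is exactly what makes the ``outermost variable regenerated infinitely often'' of the formula game carry the parity of the dominant priority of an infinite match of $\EG(\bbG,\bbS)$.
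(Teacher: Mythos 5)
The paper itself contains no proof of Proposition~\ref{prop:yeswecan}: it is imported wholesale from~\cite{kupk:size20}, so there is no in-paper argument to compare yours against. That said, your sketch follows, in outline, the same two constructions used in the cited work: for the first direction the parity formula built on the Fisher--Ladner closure graph (vertices $=\Clos(\varphi)$, an $\epsilon$-edge from each fixpoint formula to its unfolding, states $=$ the fixpoint formulas in the closure), and for the converse the reading of a parity formula as a hierarchical equation system eliminated in order of decreasing priority, with the ``dominant priority $=$ outermost variable regenerated infinitely often'' correspondence carrying the semantic argument. Two cautions about the steps you label routine or standard, since these are exactly where the cited proof spends its effort. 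First, in the closure-graph direction the states are substitution instances rather than subformulas of $\varphi$, so ``consistency with the nesting of fixpoints'' must be recast as consistency with the dependency order inside the clusters of the closure graph, together with the lemma that every infinite trace has a unique dominating fixpoint formula in its final cluster; the existence of a cluster-monotone, parity-correct priority map using at most $\adep{\varphi}$ values is the substance of the index bound, not a direct corollary of the textbook alternation-depth/parity correspondence. Second, $\size{\varphi_\bbG}\le 2\cdot\size{\bbG}$ is a bound on closure size while Beki\'c-style elimination can blow up the syntax tree exponentially; establishing that each vertex of $\bbG$ contributes at most two closure elements requires an explicit induction matching closure members to vertices (this is precisely where the factor $2$ comes from), and calling it a ``routine analysis'' hides the one genuinely size-theoretic step of that direction. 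With those two points fleshed out, your outline would reconstruct the proof given in~\cite{kupk:size20}.
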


\subsection{Modal Automata}

Intuitively, modal automata correspond to parity formulas in a certain normal 
form - the precise connection will be discussed in Section~\ref{sec:aut} below.
Modal automata are based on the \emph{modal one-step language}.
This language consists of very simple modal formulas, built up from 
a collection $A$ of \emph{variables}, which represent the states
of the automaton and correspond to the fixpoint  variables of a formula. 

\begin{definition}
\label{d:MLone}
Given a set $A$. 
%we define the set $\Latt(A)$ of \emph{lattice terms} over $A$ 
%through the following grammar:
%\[
%\pi \isbnf \bot \divbnf \top \divbnf a 
%   \divbnf \pi \land \pi \divbnf \pi \lor \pi,
%\]
%where $a \in A$. 
The set $\MLone(A)$ of \emph{modal one-step formulas} over $A$ is inductively
given as follows:
\[
\al
\isbnf \bot \divbnf \top 
   \divbnf \dia a \divbnf \Box a
   \divbnf \al \land \al \divbnf \al \lor \al,
\]
where $a \in A$.
We let $\Sfor(\al)$ denote the collection of subformulas of a one-step formula
$\al$.
\end{definition}

\begin{definition}
\label{d:modaut}
Let $\Prop$ be a finite set of propositional variables. 
A \emph{modal $\Prop$-automaton} $\bbA$  is a quadruple $(A,\De,\Om,a_{I})$ where
$A$ is a non-empty finite set of \emph{states}, of which $a_{I} \in A$ is the 
\emph{initial} state, $\Om : A \to \omega$ is the \emph{priority map}, and the
\emph{transition map} 
$\De : A \times \pow\Prop \to \MLone(A)$
maps pair of states and colors to one-step formulas.
% The class of modal automata over the set $\Prop$ is denoted as $\Aut(\Prop)$.
\end{definition}

The size of a modal automaton is defined as follows.

\begin{definition}
Let $\bbA = (A,\De,\Om,a_{I})$ be a modal automaton.
We define its \emph{state size} $\asz{\bbA} \isdef \size{A}$, 
its \emph{size} as 
\[
\size{\bbA} \isdef 
% \size{ \bigcup \{ \Sfor(\al) \mid \al \in \Ran(\De) \}},
\left| \bigcup \{ \Sfor(\al) \mid \al \in \Ran(\De) \} \right| + \asz{\bbA},
\]
and its \emph{index} as $\idx(\bbA) \isdef \size{\Ran(\Om)}$.
\end{definition}

Modal automata operate on pointed models, acceptance is defined 
via parity graph games.

\begin{definition}
Let $\bbA = (A,\De,\Om,a_{I})$ be a modal automaton and let $(\bbS,s_{I})$ be 
a model. 
The acceptance game $\AG(\bbA,\bbS)$ of $\bbA$ has the game board displayed in
Table~\ref{fig:acceptance}. 
 \begin{table}
 \begin{center}
     \begin{tabular}{|c|c|c|}
     \hline 
      Position & Player & Admissible moves \\
      \hline \hline
      $(a,s) \in A \times S$ & $-$ & $\{(\De(a,c_s),s)\}  \mbox{ with } c_s = \{ p \in Q \mid s\in \V(p) \}$ \\
      \hline
      $(\bot,s) \in \MLone(A) \times S$ & $\exists$ & $\emptyset$ \\ 
      \hline 
      $(\top,s) \in \MLone(A) \times S$ & $\forall$ & $\emptyset$ \\
      \hline 
      $(\alpha_1 \lor \alpha_2,s)$ & $\exists$ & $\{ (\alpha_1,s),(\alpha_2,s) \}$ \\
      \hline
       $(\alpha_1 \land \alpha_2,s)$ & $\forall$ & $\{ (\alpha_1,s),(\alpha_2,s) \}$ \\
       \hline
      $(\dia a,s)$ & $\exists$ & $\{ (a,s') \mid s' \in R[s] \}$ \\
      \hline
      $(\Box a,s)$ & $\forall$ & $\{ (a,s') \mid s' \in R[s] \}$ \\
      \hline
     \end{tabular}
 \end{center}
 \caption{The game board of the accepance game of a modal automaton.}
 \label{fig:acceptance}
 \end{table}
A pointed model $(\bbS,s_I)$ is accepted by $\bbA$ if $\exists$ has a winning
strategy at position $(a_I,s_I)$ in $\AG(\bbA,\bbS)$.
\end{definition}

\subsection{Disjunctive Formulas \& Automata}

In this section we introduce disjunctive formulas and their automata-theoretic 
pendant, so-called disjunctive automata.
Disjunctive formulas can be best characterised in a modal language that has one
``cover modality'' $\nb$ that takes a finite set of formulas as its argument. 
Given such a set $\Phi$, one may think of the formula $\nb\Phi$ as the 
abbreviation 
\[
\nb\Phi \equiv 
\bigwedge_{\phi \in \Phi} \dia \phi \land 
\Box \bigvee_{\phi \in \Phi} \phi.
\]
It is called the ``cover modality'' since, intuitively, the formula $\nb\Phi$
holds at a point $s$ if the set of successors of $s$ and the set of elements of 
$\Phi$ cover each other, in a sense that can be made precise using the notion of 
\emph{relation lifting}.
For a relation $Z \subseteq X \times Y$ we define its lifting 
$\ol{Z} \subseteq \pow X \times \pow Y$ by putting 
\[
(U,V) \in \ol{Z} \quad \mbox{ if } \quad \forall x \in U. \exists y \in V. (x,y) \in Z \mbox{ and } 
 \forall y \in V. \exists x \in U. (x,y) \in Z.
\]
It is then easy to verify that the formula $\nb\Phi$ holds at a point $s$ if the
pair $(R[s],\Phi)$ belongs to the lifting of the truth relation between points
and formulas.
The operator $\nb$ is well-known from the literature, 
cf.~e.g.~\cite{jani:auto95,vene:auto06,leht:disj15}. 

\begin{definition}
 \label{d:Dparityformula}
Let $\Prop$ be a finite set of propositional variables. 
A {\em disjunctive parity formula} over $\Prop$ is a quintuple 
$\bbG = (V,E,L,\Om,v_{I})$ such that
\begin{itemize}
\item  $L: V \to \{\land_l \mid l \in \Lit(\Prop) \}
      \cup \{ \nb, \lor, \top, \epsilon \}$ is a labelling function;
\item for all $v \in V$ with $L(v) \not= \nb$ we have $\size{E[v]} \leq 2$
\end{itemize}
and such that all other conditions of the definition of parity formulas in
Def.~\ref{d:pf} are met.
The board of the evaluation game $\EG(\bbG, \bbS)$ of a disjunctive parity formula on a model $\bbS$ is displayed in Table~\ref{tb:3}. 
\begin{table}[t]
\begin{center}
\begin{tabular}{|ll|c|c|}
\hline
\multicolumn{2}{|l|}{Position} & Player  & Admissible moves 
\\\hline
%      $(v,s)$ & with $L(v) = p$ and $s \in \V(p)$         
%    & $\abel$ & $\nada$ 
% \\   $(v,s)$ & with $L(v) = p$ and $s \notin \V(p)$      
%    & $\eloi$ & $\nada$ 
% \\   $(v,s)$ & with $L(v) = \lneg{p}$ and $s \in \V(p)$    
%    & $\eloi$ & $\nada$ 
% \\   $(v,s)$ & with $L(v) = \lneg{p}$ and $s \notin \V(p)$ 
%    & $\abel$ & $\nada$ 
    $(v,s)$ & with $L(v) = \epsilon$ 
   & - & $E[v] \times \{ s \}$ 
\\  $(v,s)$ & with $L(v) = \top$ 
   & $\abel$ & $\nada$ 
\\   $(v,s)$ & with $L(v) = \lor$ 
   & $\eloi$ & $E[v] \times \{ s \}$ 
\\   $(v,s)$ & with $L(v) = \land_l$ and $s \not\in \V(l)$
& $\eloi$ & $\nada$ 
%\\   $(v,s)$ & with $L(v) = \land_{\ol{p}}$ and $s \notin \V(p)$
%& $\eloi$ & $E[v] \times \{ s \}$ 
\\   $(v,s)$ & with $L(v) = \land_{l}$ and $s \in \V(l)$
& $\abel$ & $E[v] \times \{ s \}$ 
%\\   $(v,s)$ & with $L(v) = \land_p$ and $s \notin \V(p)$
%   & $\abel$ & $E[v] \times \{ s \}$ 
%\\   $(v,s)$ & with $L(v) = \dia$ 
%   & $\eloi$ & $E[v] \times R[s]$ 
\\   $(v,s)$ & with $L(v) = \nb$ 
   & $\eloi$ & $\{ Z \subseteq V \times S \mid (E[v],R[v]) \in \ol{Z} \}$ \\
      Z & $\subseteq V \times S$ & $\forall$ 
	& $\{ (v',s') \mid (v',s') \in Z \}$ \\
  \hline
\end{tabular}
\end{center}
\caption{The evaluation game $\EG(\bbG,\bbS)$.}
\label{tb:3}
\end{table}
\end{definition}
Intuitively, a node $v$ labelled with $\nb$ represents the formula 
$\nb \{ \phi_{w} \mid w \in E[v] \}$,
where for each $w \in E[v]$ we write $\varphi_w$ for the formula represented by
$w$. 
Furthermore $\land_l$ is intended to be a unary operator that represents the 
conjunction of its argument with $l$. 
In other words, disjunctive formulas are formulas that contain conjunctions only 
in the form of conjunctions with literals and in the form of $\top$ that can be
thought of as the empty conjunction. 
Disjunctive formulas have their automata-theoretic pendant, the so-call disjunctive
modal automata -- the so-called \emph{$\mu$-automata} of Janin \& 
Walukiewicz~\cite{jani:auto95}.
These are defined by restricting the shape of transition conditions. 

% Do we need top and bottom in the definition of disjunctive parity
% formulas? I think bottom we need for sure. But maybe not top.

\begin{definition}
\label{d:DMLone}
Given a finite set $A$, we define the set $\DMLone(A)$ of \emph{disjunctive modal 
one-step formulas} over $A$ via the following grammar: %\vspace{0mm}
\[
\al
\isbnf \bot \divbnf \top 
   \divbnf \nb B
   \divbnf \al \lor \al,
\]
where $B \sse A$. A disjunctive modal $\Prop$-automaton is a tuple
$\bbA = (A,\De,\Om,a_{I})$ such that
$\De: A \times \pow \Prop \to \DMLone(A)$.
The acceptance game $\AG(\bbA,\bbS)$ on a model $\bbS$ is defined
as for general modal automata with the difference that the rule for $\land$ no longer applies and
that the rules for $\Box$ and $\dia$ are replaced  by
\begin{center}
 \begin{tabular}{|c|c|c|}
     \hline 
      Position & Player & Admissible moves \\
     \hline \hline 
     $(\nb B,s)$ & $\exists$ & $\{ Z \subseteq A \times S \mid (B,R[s]) \in \ol{Z} \}$ \\
     \hline
     $Z \subseteq A \times S$ & $\forall$ & $\{ (v',s') \mid (v',s') \in Z \}$ \\
     \hline 
 \end{tabular}
\end{center}
\end{definition}

\section{Guarding Revisited}

Existing approaches for turning a $\mu$-calculus formula into a modal
automaton rely on the assumption that the input formula is guarded.
% Do we need to define guardedness here?
As \cite{brus:guar15} have shown this assumption is problematic because
existing algorithms for guarding formulas, which have long been thought
to be polynomial, are in fact exponential. In this section we discuss
two results on the complexity of guarding formulas. We do this in the
setting of parity formulas.
% Maybe explain why we think that using parity formulas is enough to
% discuss guarding!:

\begin{definition}
\label{d:guar}
A path $\pi = v_{0}v_{1}\cdots v_{n}$ through $\bbG$ is \emph{unguarded} if 
$n\geq 1$, $v_{0}, v_{n} \in \Dom(\Om)$ while there is no $i$, with $0 < i \leq 
n$, such that $v_{i}$ is a modal node. A
parity formula is \emph{guarded} if it has no unguarded cycles, and
\emph{strongly guarded} if it has no unguarded paths.
\end{definition}

Adapting the well-known construction for guarding formulas one can show
that it is possible to guard parity formulas, with an exponential
blow-up in the number of states~\cite{kupk:size20}.

\begin{theorem}
\label{t:guard1}
There is an algorithm that transforms a parity formula $\bbG = (V,E,L,\Om,v_{I})$
into a strongly guarded parity formula $\bbG^{g}$ such that 

\begin{enumerate}[topsep=0pt,itemsep=-1ex,partopsep=1ex,parsep=1ex,%
    label={\arabic*)}]%,ref={\arabic*}]

\item \label{eq:tg1:1} 
$\bbG^{g} \equiv \bbG$;

\item \label{eq:tg1:2}
$\size{\bbG^{g}} \leq 2^{1+\size{\Dom(\Om)}} \cdot \size{\bbG}$;

\item \label{eq:tg1:3}
$\idx(\bbG^{g}) \leq \idx(\bbG)$.
% \item \label{eq:tg1:4}
% %$(E^{g})^{-1}[\Dom(\Om)] \sse V^{g}_{m}$; that is, 
% in $\bbG^{g}$ every predecessor of a state is a modal node.
\end{enumerate}
\end{theorem}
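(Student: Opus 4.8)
The plan is to adapt the classical guarding construction for $\mu$-calculus formulas (as in Janin--Walukiewicz or Kupke--Marti--Venema~\cite{kupk:size20}) to the setting of parity formulas, tracking the size and index carefully. The core idea is that an unguarded path from a state $v_0 \in \Dom(\Om)$ back to a state $v_n \in \Dom(\Om)$ without passing through a modal node only traverses boolean and $\epsilon$-labelled nodes, so the ``behaviour'' along such a path is purely propositional/boolean. I would therefore define the new vertex set $V^g$ to consist of pairs $(v, U)$ where $v \in V$ and $U \subseteq \Dom(\Om)$ records the set of states that have been ``passed through'' since the last modal node was crossed (equivalently, the set of states reachable from the current point by an unguarded path segment). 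This is the source of the $2^{\size{\Dom(\Om)}}$ factor in item~\ref{eq:tg1:2}.

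The construction proceeds as follows. First I would make the graph structure explicit: for a boolean or $\epsilon$-node $(v,U)$, its successors are $(w, U')$ for $w \in E[v]$, where $U'$ extends $U$ by adding $w$ if $w$ is a state and otherwise $U' = U$; for a modal node $(v, U)$, we cross a modality, so its unique successor is $(w, \emptyset)$ (or $(w, \{w\})$ if $w$ is a state) for the unique $w \in E[v]$; atomic nodes have no successors. The labelling $L^g(v,U) \isdef L(v)$ is inherited. The subtle point is the priority map $\Om^g$ and the detection of unguarded cycles: when a boolean/$\epsilon$-transition from $(v,U)$ would reach a state $w$ that already lies in $U$, we have closed an unguarded cycle, and the construction must ``resolve'' it — this is exactly where one redirects the edge or collapses, typically by assigning the appropriate priority (the maximum priority seen on the cycle) to a fresh node or by merging. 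In the parity-formula world one handles this by letting $(v,U)$ with $v$ a state and $v \in U$ inherit priority $\max\{\Om(u) \mid u \in U\}$, which both guarantees that every cycle in $\bbG^g$ now contains a node in $\Dom(\Om^g)$ that is traversed only when a modality is crossed — giving strong guardedness — and ensures $\Ran(\Om^g) \subseteq \Ran(\Om)$, hence item~\ref{eq:tg1:3}.

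For the equivalence claim~\ref{eq:tg1:1}, the argument is a routine strategy-transfer between the evaluation games $\EG(\bbG, \bbS)$ and $\EG(\bbG^g, \bbS)$: a play in $\bbG^g$ projects to a play in $\bbG$ by forgetting the $U$-component, and conversely a play in $\bbG$ lifts uniquely to one in $\bbG^g$; one checks that finite plays are lost by the same player and that the parity condition is preserved, using the fact that an infinite play through $\bbG^g$ either crosses infinitely many modalities — in which case the $U$-components are reset infinitely often and the highest priority seen infinitely often agrees with that of the projected play — or eventually stays within boolean/$\epsilon$-nodes, which is impossible in $\bbG^g$ by strong guardedness (after finitely many steps the $U$-set stabilises and an unguarded cycle is closed, forcing a modal step or a state node). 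The size bound~\ref{eq:tg1:2} follows by counting: $\size{V^g} \leq \size{V} \cdot 2^{\size{\Dom(\Om)}}$, and the ``$1+$'' in the exponent absorbs a possible constant number of auxiliary nodes introduced when splitting to keep $\size{E^g[v]} \leq 2$.

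The main obstacle I expect is the correctness of the priority assignment around closed unguarded cycles: one must be careful that assigning $\max\{\Om(u) \mid u \in U\}$ to the ``loop-closing'' node genuinely captures the fixpoint behaviour of the unguarded cycle in $\bbG$ — in particular, that this does not inadvertently change which player wins an infinite play that enters and exits the cycle region finitely often, and that it correctly reflects the alternation of $\mu$'s and $\nu$'s when several states of differing priority lie on overlapping unguarded cycles. Getting the bookkeeping right so that strong guardedness (no unguarded \emph{paths}, not merely cycles) holds — which requires that \emph{every} maximal unguarded segment be bounded in length — is the delicate part; the $2^{\size{\Dom(\Om)}}$ blow-up is precisely the price of making the $U$-component large enough to detect every repetition of a state before a modality is crossed.
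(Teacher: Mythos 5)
The paper does not actually prove Theorem~\ref{t:guard1}; it invokes the guarded transformation of~\cite{kupk:size20}, so your proposal has to stand on its own. The powerset idea (pairs $(v,U)$ with $U\subseteq\Dom(\Om)$ recording the states seen since the last modal node) is the right starting point, but your concrete recipe fails at exactly the point you flag as delicate. First, assigning a priority to the ``loop-closing'' node does nothing for (strong) guardedness: guardedness is a statement about modal nodes occurring on paths between elements of $\Dom(\Om)$, not about which nodes carry priorities. If the edge from a node $(v,U)$ into an already-visited state $w\in U$ is kept (it then ends in $(w,U)$ itself), the unguarded cycle survives in $\bbG^g$; so such edges must be redirected or removed, and you never say to what. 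Second, under the charitable reading that the edge is redirected to $\top$ or $\bot$ according to the parity of $\max\{\Om(u)\mid u\in U\}$, the bookkeeping is semantically wrong, because $U$ contains states visited \emph{before} the last visit to $w$, which need not lie on the cycle being closed. Concretely, let the initial node be a $\lor$-node $v_0$ without priority whose unique successor is $u_1$; let $u_1$ be a state with $\Om(u_1)=2$, label $\lor$ and unique successor $w$; and let $w$ be a state with $\Om(w)=1$, label $\lor$, and successors $w$ itself and an atom labelled $p$. Then $\bbG\equiv p$ (looping at $w$ forever produces highest priority $1$), but your construction reaches $(w,\{u_1,w\})$, detects $w\in U$, computes $\max\{2,1\}=2$, and resolves the loop positively, giving a formula equivalent to $\top$; under the literal reading (keep the edge, give $(w,U)$ priority $2$) one again gets $\top$, and moreover $\bbG^g$ is not even guarded because the self-loop at $(w,U)$ crosses no modal node. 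The quantity needed is the highest priority seen since the last visit of $w$, and the set $U$ alone does not determine it; obtaining this within a $2^{\size{\Dom(\Om)}}$ budget (for instance by maintaining $U$ so that $w\in U$ certifies that no higher priority has occurred since $w$ was last visited) is precisely the nontrivial invariant of the construction in~\cite{kupk:size20}, and it is missing here.

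Two further gaps. Even with a correct $\top/\bot$ resolution you obtain at best guardedness (no unguarded cycles), not strong guardedness: unguarded \emph{paths} between priority-carrying nodes remain, and eliminating them requires relocating priorities (e.g.\ to the nodes reached immediately after a modal step) together with an argument that this does not change the winner of any infinite play; you assert strong guardedness but do not address this. Finally, once edges are short-circuited, the equivalence proof is not a ``routine'' transfer that forgets the $U$-component: plays of $\EG(\bbG,\bbS)$ that close an unguarded cycle correspond to plays of $\EG(\bbG^g,\bbS)$ that terminate at $\top$ or $\bot$, so one must prove that $\eloi$ wins $\EG(\bbG,\bbS)$ iff she can win while never closing an unguarded cycle whose top priority is odd (the parity-formula analogue of $\mu x.\phi(x,x)\equiv\mu x.\phi(\bot,x)$, and dually for $\abel$). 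That soundness argument is the heart of the proof, not a side remark, and it interacts with the priority bookkeeping criticised above.
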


It is unclear whether this result can be improved such that the number
of states of $\bbG^g$ is polynomial in the number of states in $\bbG$.
The results in section~4 of Bruse, Friedmann \& Lange~\cite{brus:guar15} show that 
certain guarded transformation procedures are as hard\footnote{%
   It is an open question whether parity games can be solved in polynomial time. 
   Despite considerable efforts no polynomial algorithm has been found so far.
   In the recent literature, however, various quasi-polynomial algorithms have 
   been given, following the breakthrough work of Calude et 
   alii~\cite{calu:deci17}.
   }
as solving parity games. Theorem~\ref{t:gtlow} below can be seen as our
parity-formula version of this observation. Our proof is in fact simpler
because we can exploit the close connection between parity games and
parity formulas and thus do not need the product construction from
\cite{kupfer:linbran05} that is used for the results from
\cite{brus:guar15}.

\newcommand{\ptime}{\textsc{ptime}}

\begin{theorem}
\label{t:gtlow}
If there is a procedure that runs in polynomial time and transforms a parity
formula $\bbG$ to a guarded parity formula $\bbG^\gamma$ with $\bbG^\gamma 
\equiv \bbG$ then solving parity games is in \ptime.
\end{theorem}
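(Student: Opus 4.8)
The plan is: assuming a guarding procedure that runs in polynomial time, I give a polynomial-time algorithm that decides who wins a parity game, by encoding the game as a parity formula, guarding it with the assumed procedure, and then evaluating the guarded formula on a one-point model, where guardedness makes the evaluation game acyclic. Given a parity game $\mathcal{P}$ (with its positions partitioned into $\exists$- and $\forall$-positions, move relation $E_{\mathcal{P}}$ and priority map $\Om_{\mathcal{P}}$) and a designated position $p_I$, I first build in polynomial time a parity formula $\bbG_{\mathcal{P}}$ over the empty set of proposition letters that literally \emph{is} the game board of $\mathcal{P}$: one node per position, labelled $\lor$ at $\exists$-positions and $\land$ at $\forall$-positions, with the edge relation of $\bbG_{\mathcal{P}}$ mirroring $E_{\mathcal{P}}$, every original position placed in $\Dom(\Om)$ with its game priority, and $p_I$ as the initial node. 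Two routine adjustments make this conform to Definition~\ref{d:pf}: a position with more than two successors is expanded into a small binary tree of fresh $\lor$- (resp.\ $\land$-) nodes, which carry no priority --- as each such tree is acyclic, every cycle of $\bbG_{\mathcal{P}}$ still runs through an original, priority-carrying node, so clause~2 of Definition~\ref{d:pf} is met --- and a position from which its owner cannot move is relabelled $\bot$ (for $\exists$) or $\top$ (for $\forall$).

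Since $\bbG_{\mathcal{P}}$ contains neither literals nor modalities, in $\EG(\bbG_{\mathcal{P}}, \bbS)$ the state component of a position is never changed, so for \emph{every} pointed model $(\bbS, s)$ the play from $(p_I, s)$ lives on a game board isomorphic to $\mathcal{P}$ (the finitely many intermediate priority-$0$ nodes introduced by the fan-out do not affect the winner, and can anyway be avoided by shifting all priorities of $\mathcal{P}$ up by one). Hence $\bbG_{\mathcal{P}}$ holds at $s$ in $\bbS$ exactly when $\exists$ wins $\mathcal{P}$ from $p_I$, independently of $(\bbS, s)$: the parity formula $\bbG_{\mathcal{P}}$ holds at every pointed model if $\exists$ wins $\mathcal{P}$ and at none otherwise.

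Now I apply the assumed procedure to obtain a guarded parity formula $\bbG^{\gamma} \equiv \bbG_{\mathcal{P}}$, necessarily of size polynomial in $\size{\mathcal{P}}$. The key point is that guardedness makes $\bbG^{\gamma}$ trivially evaluable on the one-point model $\bbS_{\ast}$ whose accessibility relation is empty. First, every cycle of $\bbG^{\gamma}$ contains a modal node: by clause~2 of Definition~\ref{d:pf} it contains some state, and rotating the cycle so that it begins and ends at that state, guardedness forbids it from being an unguarded cycle, which forces a modal node onto it. In $\EG(\bbG^{\gamma}, \bbS_{\ast})$ both modal and atomic positions are moveless (at a modal position the owner would have to move into $E[v] \times R[\ast] = \emptyset$), so an infinite play would have to trace an infinite $E$-path avoiding modal and atomic nodes, which in the finite graph underlying $\bbG^{\gamma}$ produces a modal-node-free cycle --- a contradiction. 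Thus $\EG(\bbG^{\gamma}, \bbS_{\ast})$ has no infinite plays; its reachable part is a DAG with the moveless positions as sinks, and the winner of $(v_I^{\gamma}, \ast)$ is computed by one bottom-up pass in time polynomial in $\size{\bbG^{\gamma}}$. Since $\exists$ wins from $(v_I^{\gamma}, \ast)$ iff $\bbG^{\gamma}$ holds at $\ast$ in $\bbS_{\ast}$ iff (by $\bbG^{\gamma} \equiv \bbG_{\mathcal{P}}$) $\exists$ wins $\mathcal{P}$ from $p_I$, and all three stages --- building $\bbG_{\mathcal{P}}$, running the procedure, solving the acyclic game --- are polynomial, solving parity games is in \ptime.

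The step I expect to be the main obstacle is the second one. Because the guarding procedure is used as a black box, its output $\bbG^{\gamma}$ cannot be inspected structurally --- in particular one may not assume it is modality-free, which would make it acyclic outright --- so the crux is to read the winner of $\mathcal{P}$ off $\bbG^{\gamma}$ efficiently; this is resolved by the evaluation on $\bbS_{\ast}$, where guardedness collapses $\EG(\bbG^{\gamma}, \bbS_{\ast})$ to a finite game. What remains is bookkeeping: checking that $\bbG_{\mathcal{P}}$ meets all clauses of Definition~\ref{d:pf} (the branching bound, and that every cycle meets $\Dom(\Om)$), and that neither the fan-out nor the priority shift changes which player wins $\mathcal{P}$.
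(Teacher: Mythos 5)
Your proposal is correct and takes essentially the route the paper intends (the proof it defers to \cite{kupk:size20}): exploit the close correspondence between parity games and parity formulas by encoding the game board directly as a modality-free parity formula, apply the assumed polynomial-time guarding procedure, and read off the winner by evaluating the guarded output on the one-point model with empty accessibility relation, where guardedness forces every cycle through a moveless modal position and hence makes the evaluation game acyclic and solvable by backward induction. One small aside should be dropped or fixed: shifting all priorities of $\mathcal{P}$ up by one flips their parity and hence the winner (shift by two if you want to avoid priority $0$), though your primary justification --- that the extra priority-$0$ nodes never change the maximal priority occurring infinitely often --- already suffices.
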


For a proof of Theorem~\ref{t:gtlow} we refer to~\cite{kupk:size20}, where we also
discuss in some detail the relation with other results in~\cite{brus:guar15}.
% Theorem~\ref{t:gtlow} stands in tension with Theorem~3.2 in~\cite{brus:guar15},
% where the authors claim that for the standard guarded transformation
% procedure $\tau_0$ described in the paper the number of the elements in the 
% closure of the guarded output formula $\tau_0(\phi)$ is at most quadratic 
% in the number of elements in the closure of the input formula $\phi$.
% We doubt that the given proof actually shows this.
% It relies on the assumption that there exists a hierarchical equation 
% system (HES) that is equivalent to the input formula $\phi$, of size
% equal to the number of elements in the closure of $\phi$, and which satisfies
% the additional property that every variable occurs only once on the right hand 
% side of an equation with higher priority. 
% The existence of such a hierarchical equation system is unclear, however.
% 
% The authors seem to be aware of this issue when they write, on top of page 213,
% that ``the usual guarded transformation procedures expect an HES of a special
% form, namely one stemming from an $\mathcal{L}_{\mu}$-formula, and produce an
% HES which does not fall into this class''.
% It appears that this is exactly the problem with the construction they describe
% in their proof of Theorem~3.2.

\section{Modal Automata and Strongly Guarded Parity Formulas}
\label{sec:aut}

In this section we establish a close connection between modal automata and 
parity formulas.
We will first see that a modal automaton can be turned into a strongly guarded 
parity formula that is of linear size if we ignore propositional variables. 
In particular this shows that turning a parity formula into an equivalent modal
automaton is at least as hard as the guarding construction from the previous 
section (hardness of the latter does not depend on formulas containing 
propositional variables).
We close by showing how to turn a parity formula
into an equivalent modal automaton of exponential size. 
Collectively the results from this section will show that 
constructing a disjunctive modal automaton from a parity formula
by first turning the latter into an equivalent modal automaton to which we then
apply a known ``non-determinisation'' construction would yield a doubly
exponential blow-up (in closure size). This sets the stage for our main result 
in the next section where a new construction that is single exponential in 
closure size from parity automata to disjunctive modal automata is provided.

\begin{theorem}\label{thm:modal}
There is an algorithm that constructs, given a modal automaton $\bbA$, a
strongly guarded parity formula $\bbG$ such that 

1) $\bbG \equiv \bbA$;

2) $\size{\bbG} \leq 2^{\size{\Prop}} \cdot \size{\bbA}$;
%\marginpar{\fbox{What is $Q$?}}

3) $\idx(\bbG) \leq \idx(\bbA)$.

4) If $\bbA$ is disjunctive then so is $\bbG$.
\end{theorem}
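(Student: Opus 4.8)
The plan is to construct $\bbG$ from $\bbA = (A,\De,\Om,a_I)$ by, roughly speaking, expanding each transition condition $\De(a,c)$ into its syntax tree and gluing these trees together at the state nodes. Concretely, I would take the node set of $\bbG$ to consist of the states $A$ together with, for each colour $c \in \pow\Prop$ and each state $a$, a fresh copy of the subformula tree of $\De(a,c)$ (sharing the leaves that are states, which point back into $A$); plus, since the automaton branches on the colour of the current point, I would insert for each state $a$ a little "case-split" gadget realised with $\epsilon$-nodes and $\land_l$-style conjunctions with literals, so that from $a$ one is routed to the root of the tree of $\De(a,c_s)$ exactly when the current point has colour $c$. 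The labelling $L$ is inherited from the one-step connectives ($\land,\lor,\dia,\Box$, or the cover modality $\nb$ in the disjunctive case), the priority map $\Om$ of $\bbG$ is defined on the state nodes only (copying $\Om$ of $\bbA$) and undefined elsewhere, and $v_I \isdef a_I$.

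The key steps, in order, are: (1) \emph{Size bound.} The subformula trees contributing to $\bbG$ number at most $|\bigcup\{\Sfor(\al) \mid \al \in \Ran(\De)\}|$ when we share across states, but because the transition map is indexed by colours we pay a factor of $2^{\size{\Prop}}$ for the colour-case gadgets and for the fact that distinct colours may call for distinct subformula trees; counting carefully gives $\size{\bbG} \leq 2^{\size{\Prop}} \cdot \size{\bbA}$, which is item~2. Item~3 is immediate since $\Ran(\Om)$ is unchanged. (2) \emph{Strong guardedness.} Every state node lies in $\Dom(\Om)$ and, by construction, every path from one state node to another must pass through the root of some $\De(a,c)$-tree and hence through at least one modal node ($\dia$, $\Box$, or $\nb$) — this uses that the one-step language $\MLone(A)$ (resp.\ $\DMLone(A)$) only refers to states $a \in A$ underneath a modality. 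Hence $\bbG$ has no unguarded paths at all, giving item~2 of Definition~\ref{d:guar}. (3) \emph{Correctness, item~1.} I would show $\bbG \equiv \bbA$ by exhibiting a strategy-preserving correspondence between the evaluation game $\EG(\bbG,\bbS)$ (Table~\ref{tb:2}, resp.\ Table~\ref{tb:3}) and the acceptance game $\AG(\bbA,\bbS)$ (Table~\ref{fig:acceptance}): a position $(a,s)$ in $\AG$ matches $(a,s)$ in $\EG$, the colour-case gadget in $\bbG$ deterministically simulates the single $\AG$-move $(a,s) \mapsto (\De(a,c_s),s)$, and positions inside a one-step subformula tree match the corresponding positions $(\al,s)$ of $\AG$ move-for-move, with the diamond/box (resp.\ $\nb$) rules lining up verbatim. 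Since priorities are carried only by the state nodes and match on both sides, an infinite play projects to the same sequence of relevant priorities, so $\exists$ wins one game from $(a_I,s_I)$ iff she wins the other. (4) \emph{Item~4.} If $\bbA$ is disjunctive then $\Ran(\De) \subseteq \DMLone(A)$, so every one-step tree uses only $\bot,\top,\nb B,\lor$; the colour-case gadget can be built using $\lor$, $\epsilon$ and $\land_l$ nodes only, so $L$ takes values in the disjunctive label set and $\bbG$ satisfies Definition~\ref{d:Dparityformula}.

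The main obstacle I expect is the colour-case gadget and its effect on the size bound: a modal automaton's transition map is a function of \emph{both} a state and a colour, whereas a parity formula branches structurally, so one must encode "test the colour of the current point and branch accordingly" into the graph, and the naive encoding (a full binary decision tree over $\Prop$ at every state) already costs $2^{\size{\Prop}}$ per state. One has to argue that this is all the blow-up there is — i.e.\ that the total size of $\bbG$ is $2^{\size{\Prop}}$ times $\size{\bbA}$ and not more — which requires sharing the subformula nodes across colours wherever $\De(a,c)$ and $\De(a,c')$ overlap, and carefully bookkeeping so that the $+\,\asz{\bbA}$ summand in $\size{\bbA}$ absorbs the state copies. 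A secondary subtlety is making sure the gadget is genuinely \emph{deterministic} (so it contributes no spurious choices to either player) and that it introduces no new cycles, so that condition~2 of Definition~\ref{d:pf} and strong guardedness are preserved; using $\epsilon$-labelled nodes (which have a unique successor and no owner) is the clean way to do this.
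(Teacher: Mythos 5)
Your proposal is correct and follows essentially the same route as the paper's own proof: take as nodes the automaton states (labelled $\epsilon$ and carrying the priorities) together with the subformula dag of the one-step formulas, prefix each state with a colour case-split of size $2^{\size{\Prop}}$ realised as a disjunction over colours guarded by literal conjuncts (chained $\land_l$ nodes in the disjunctive case), note that any state-to-state path must cross a modality since states occur only under modalities in the one-step language, and prove equivalence by a step-by-step correspondence between $\EG(\bbG,\bbS)$ and $\AG(\bbA,\bbS)$. The only small correction: the colour gadget cannot be made genuinely deterministic in parity-formula syntax --- the paper simply lets $\exists$ choose the colour disjunct, with wrong choices losing against the literal tests --- and in the non-disjunctive case those tests must use binary $\land$ with literal leaves rather than $\land_l$ labels (which exist only for disjunctive parity formulas); neither point affects correctness.
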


\begin{proof}
   We only sketch the construction.  
   For each $a \in A$ we let
   $\De'(a) = \bigvee_{c \in \pow \Prop} (\bigwedge_{p \in c} p \land \bigwedge_{p \not \in c} \lneg{p}  \land \De(a,c))$.
   Given a modal automaton $\bbA=(A,\De,\Om,a_I)$
   we define the set of nodes of $\bbG$ by
   \[ V = A  \cup \bigcup \{ \Sfor(\al) \mid \al \in \Ran(\De') \} \]
To defined the edge relation $E$ of $\bbG$ we put $E[a] = \De'(a)$ for all $a \in A$.
For all other elements of $V$ we let $E$ be the ``immediate subformula'' relation, e.g.
$E[\alpha_1 \land \alpha_2] = \{\alpha_1,\alpha_2\}$, $E[\dia a] = \{a\}$, etc. 
The priority map $\Om_\bbG$ of $\bbG$ assigns to each element $a \in A$ the priority $\Om(a)$. 
The initial state $v_I$ of $\bbG$ is defined as $v_I = a_I$.
Finally, the map $L$ assigns to each element $a \in A$ the label $\epsilon$ and for each formula
$\al$ the label consists of the top-most operator of $\al$. It is easy to see that $\bbG$ thus defined satisfies conditions 2) and 3). The proof of condition
   1) is following a standard argument and is omitted here.
   Concerning 4) it suffices to note that for a disjunctive $a \in A$ we put
   $$\De'(a) = \bigvee_{c \in \pow \Prop} \left(\land_{l_1} \left( \cdots \land_{l_m}(\De(a,c))\right)\right)$$
   where $l_1,\dots,l_n$ is an enumeration of all propositional 
   variables in $c$ and the negation of all variables 
   not in $c$. Otherwise $\bbG$ is defined as in the non-disjunctive case and the result is a disjunctive formula.
\end{proof}

\begin{theorem}
There is an algorithm that constructs, given a parity formula $\bbG$, a modal
automaton $\bbA$ such that 

1) $\bbA \equiv \bbG$;

2) $\asz{\bbA} \leq  2^{1+\size{\Dom(\Om)}} \cdot \size{\bbG}$ and 
   $\size{\bbA} \leq 2^{\size{\Prop} + 1+\size{\Dom(\Om)}} \cdot \size{\bbG}$;

3) $\idx(\bbA) \leq \idx(\bbG)$.
\end{theorem}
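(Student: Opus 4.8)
The plan is to reduce to the guarded case and then read off the automaton. First I would apply Theorem~\ref{t:guard1} to $\bbG$, obtaining a strongly guarded parity formula $\bbG^{g}=(V,E,L,\gOm,v_{I})$ with $\bbG^{g}\equiv\bbG$, $\size{\bbG^{g}}\le 2^{1+\size{\Dom(\Om)}}\cdot\size{\bbG}$ and $\idx(\bbG^{g})\le\idx(\bbG)$, where $\Om$ is the priority map of $\bbG$. Inspecting that construction, I would moreover arrange that the set of states $\Dom(\gOm)$ is exactly $\{v_{I}\}\cup\{\,E[v]\mid v\text{ a modal node of }\bbG^{g}\,\}$, i.e.\ that the initial node and every child of a modal node is a state and nothing else is; this holds for the parity formulas produced by (a minor normalisation of) the construction underlying Theorem~\ref{t:guard1} and does not affect the size or index bounds.

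Given such a $\bbG^{g}$, I would define $\bbA=(A,\De,\Om_{\bbA},a_{I})$ by $A\coloneqq\Dom(\gOm)$, $a_{I}\coloneqq v_{I}$ and $\Om_{\bbA}\coloneqq\gOm$, and obtain the transition map by ``unfolding $\bbG^{g}$ below a state down to the next modal step''. Concretely, for a colour $c\in\pow\Prop$ define $\tr_{c}\colon V\to\MLone(A)$ by recursion: $\tr_{c}(u)$ is $\top$ or $\bot$ according to $c$ when $u$ is atomic; $\tr_{c}(u)=\tr_{c}(u_{0})\lor\tr_{c}(u_{1})$ when $L(u)=\lor$ and $E[u]=\{u_{0},u_{1}\}$, and analogously with $\land$; $\tr_{c}(u)=\tr_{c}(u')$ when $L(u)=\epsilon$ and $E[u]=\{u'\}$; and $\tr_{c}(u)=\dia\,w$ resp.\ $\Box\,w$ when $u$ is modal with $E[u]=\{w\}$, using that then $w\in A$. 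Since $\bbG^{g}$ is strongly guarded it has no cycle consisting only of non-modal nodes (such a cycle would contain a state by the cycle condition of Def.~\ref{d:pf}, and would then itself be an unguarded path), so this recursion terminates and $\tr_{c}(u)$ is a well-defined one-step formula; I then set $\De(a,c)\coloneqq\tr_{c}(a)$. Note that while computing $\tr_{c}(a)$ from a state $a$ one never reaches another state before reaching a modal node, again by strong guardedness, so $\De(a,c)$ is genuinely a one-step formula over $A$.

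For correctness I would, for each pointed model $(\bbS,s_{I})$, set up a correspondence between the acceptance game $\AG(\bbA,\bbS)$ and the evaluation game $\EG(\bbG^{g},\bbS)$: positions $(a,s)$ with $a\in A$ are shared; each one-step position visited while unfolding $\De(a,c_{s})=\tr_{c_{s}}(a)$ corresponds to the node-position $(u,s)$ of $\EG(\bbG^{g},\bbS)$ from which it was produced (the $\epsilon$-positions of the latter being transparent), and under this correspondence the owners of positions and the moves at $\dia$- and $\Box$-positions agree. The priority of $(a,s)$ is $\gOm(a)$ in both games, all other positions carry priority $0$, and — crucially — because by strong guardedness no non-modal path connects two states, the only states visited along an infinite play are the first components $a\in A$ of its $A\times S$-positions, so the priorities seen infinitely often along matching plays coincide. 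Hence a winning strategy for either player in one game transfers to the other (at a $\dia w$-position of $\AG(\bbA,\bbS)$ one plays to the successor world chosen by the $\EG$-strategy at the matching modal node, and conversely), giving $\bbA\equiv\bbG^{g}\equiv\bbG$. The size bounds are then bookkeeping: $\asz{\bbA}=\size{\Dom(\gOm)}\le\size{\bbG^{g}}\le 2^{1+\size{\Dom(\Om)}}\cdot\size{\bbG}$, and every element of $\bigcup\{\Sfor(\al):\al\in\Ran(\De)\}$ is of the form $\tr_{c}(u)$ for some $u\in V$ and $c\in\pow\Prop$ (or is a variable in $A$), which yields $\size{\bbA}\le 2^{\size{\Prop}}\cdot\size{\bbG^{g}}+\Ord(\size{\bbG^{g}})\le 2^{\size{\Prop}+1+\size{\Dom(\Om)}}\cdot\size{\bbG}$; finally $\idx(\bbA)=\size{\Ran(\gOm)}=\idx(\bbG^{g})\le\idx(\bbG)$.

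I expect the main obstacle to be the preparatory step: checking that (a suitable form of) the guarding construction of Theorem~\ref{t:guard1} produces a strongly guarded parity formula whose states are precisely the initial node and the children of its modal nodes, while respecting the size and index bounds. This structural property is exactly what makes both the well-definedness of $\De$ (the unfolding never has to ``continue at the same world as another state'') and the matching of the parity conditions in the correctness argument go through; everything after it is routine.
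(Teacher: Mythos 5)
Your construction itself is essentially the one in the paper: guard via Theorem~\ref{t:guard1}, take (roughly) the successors of modal nodes as automaton states, unfold the graph from each state down to the next modal layer to obtain a one-step formula, and resolve propositional literals by the colour $c$ (the paper does this via an auxiliary map $\al: V \to \AMLone(\Prop,A)$ and a substitution $\si_{c}$, which is your $\tr_{c}$). The size and index bookkeeping also matches. The problem is the preparatory structural assumption, which you yourself flag as the main obstacle: it is not the right one, and it is where your argument has a genuine gap. You require $\Dom(\gOm)=\{v_{I}\}\cup E[V_{m}]$; the paper instead invokes (from~\cite{kupk:size20}) the property that \emph{every predecessor of a state is a modal node}, keeps $\Dom(\Om)$ as it is, and takes the automaton carrier $A:=E[V_{m}]\cup\{v_{I}\}\supseteq\Dom(\Om)$. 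These two conditions are not interchangeable, and yours does not make the priority bookkeeping go through.

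Concretely: a node $u\in E[V_{m}]$ may in addition have a non-modal (e.g.\ boolean) predecessor, since nodes in the graph can be shared. If such a $u$ is itself modal and carries a priority, then the unfolding of $\tr_{c}(a)$ from another state $a$ can pass through $u$: strong guardedness does \emph{not} forbid a path from $a$ to $u$ whose only modal node is $u$ itself (by Definition~\ref{d:guar} such a path is guarded), so your remark that ``one never reaches another state before reaching a modal node'' does not cover the case where the skipped state \emph{is} the modal node. In that situation the acceptance game jumps from $a$ directly to the successor of $u$, the priority $\gOm(u)$ is never recorded, and your key claim that ``the priorities seen infinitely often along matching plays coincide'' fails — if $\gOm(u)$ is the maximal priority met infinitely often in the evaluation game, the two games can have different winners. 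Moreover, actually \emph{arranging} $\Dom(\gOm)=\{v_{I}\}\cup E[V_{m}]$ is not a minor normalisation of Theorem~\ref{t:guard1}: adding all of $E[V_{m}]$ as states can destroy strong guardedness with respect to the enlarged state set (and in any case does not cure the skipping problem above), while removing original states that lie outside $\{v_{I}\}\cup E[V_{m}]$ discards priorities and can change the semantics. The condition you actually need is the paper's predecessor condition: if every predecessor of a state is modal, then a state can only be entered immediately after a modal step, i.e.\ exactly at an $A\times S$-position of the acceptance game, which is precisely what makes the parity correspondence sound. As written, your proof is incomplete at this point.
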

\begin{proof}
By Theorem~\ref{t:guard1} we may effectively construct from $\bbG$ an equivalent,
strongly guarded parity formula $\bbH = (V,E,L,\Om,v_{I})$ such that 
$\size{\bbH} \leq  2^{1+\size{\Dom(\Om)}} \cdot \size{\bbG}$ and 
$\idx(\bbH) \leq \idx(\bbG)$.
As shown in ~\cite{kupk:size20}, we may additionally assume that in $\bbH$,
every predecessor of a node $v \in \Dom(\Om)$ is a modal node.
The state space of the modal automaton $\bbA$ will be given as the set
$A \isdef E[V_{m}] \cup \{ v_{I} \}$, %with $V_m$ denoting the set of modal nodes,
so that by the assumption on $\bbH$ every state of $\bbH$ is a state of $\bbA$.
In addition, $v_{I}$ and possible other successors of modal nodes are states of 
$\bbA$ as well.
We can then simply define $\Om_{\bbA} \isdef \Om$,
and take $v_{I}$ as the initial state of $\bbA$. 
It remains to define the transition function $\De$ of $\bbA$.

% It will be convenient to introduce, for a given modal node $v$, notation for the
% unique element of $E[v]$; let us denote this object as $e(v)$.

Our first step will be to associate, with each node $v \in V$, a formula
$\al(v)$, which belongs to the collection $\AMLone(\Prop,A)$ of 
\emph{alternative} one-step formulas given by the following grammar:
\[
\al
\isbnf \bot \divbnf \top 
   \divbnf p \divbnf \lneg{p}
   \divbnf \dia a \divbnf \Box a
   \divbnf \al \land \al \divbnf \al \lor \al,
\]
where $p \in \Prop$ and $a \in A$.
%It follows by the fact that 
As the formula $\bbH$ is strongly guarded and by the 
additional property that $E^{-1}[\Dom(\Om)] \sse V_{m}$, there is a unique
map $\al: V \to \AMLone(\Prop,A)$ which satisfies the following conditions:
\[
\al(v) = \left\{ \begin{array}{lll}
   L(v) & 
    \text{if } v \in V_{l} 
\\ \hs a
   \quad  \text{where } L(v) = \hs \text{ and } E[v] = \{ a \} 
   &  \text{if } v \in V_{m} 
\\ \bigodot \big\{ \al(u) \mid u \in E[v] \big\}
    \; \text{where } L(v) = \odot
   & \text{if } v \in V_{b} 
\\ \al(u)
   \quad \text{where } E[v] = \{ u \}
   & \text{if } v \in V_{\epsilon} 
\end{array}\right.
\]

We can now define the transition map $\De: A \times \pow(\Prop) \to \MLone(A)$ as
follows.
For each state $a \in A$ and color $c \in \pow(\Prop)$ we define the formula
$\De(a,c)$ as $\De(a,c) \isdef \al(a)[\si_{c}]$,
where the substitution $\si_{c}: \AMLone(\Prop,A) \to \MLone(A)$ is given by 
putting
\[
\si_{c}(p) \isdef \left\{\begin{array}{ll}
   \top & \text{if } p \in c
\\ \bot & \text{if } p \not\in c.
\end{array}\right.
\]
This completes the definition of the automaton $\bbA$. 
It is easy to see that $\asz{\bbA} \leq \size{V} \leq 2^{1+\size{\Dom(\Om)}} \cdot \size{\bbG}$, that 
$\size{\bbA} \leq 2^{\size{\Prop}} \times V \leq  2^{\size{\Prop} +1 + \size{\Dom(\Om)}}   \cdot \size{\bbG}$
and that $\idx(\bbA) = \idx(\bbH) \leq \idx(\bbG)$,
which proves the items 2) and 3) of the theorem.
The equivalence of $\bbA$ and $\bbH$ (and thus, of $\bbA$ and $\bbG$) can be
proved by a routine argument.
\end{proof}

\section{The Simulation Theorem}

The main result of this section, and the main technical contribution of the paper, 
is the following theorem.

\begin{theorem}
\label{t:sim}
Let 
$\bbG$ be a parity formula of size $n$ and index $k$ with propositional
variables contained in $\Prop$ with $\size{\Prop} = l$.
Then we can effectively construct a disjunctive modal 
automaton\footnote{For the time being this will be an automaton with a regular
   acceptance condition. 
   We will transform this into an automaton with a parity condition later.
   }
$\bbA = (A,\Th,\Om,a_I)$ with
$\size{A} \leq 2^{n^{2}k}$, 
$\size{\bbA} \leq n 2^{n^{2}k + l + n}$ and 
$\bbG \equiv \bbA$. 
\end{theorem}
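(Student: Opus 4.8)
The plan is to combine a Safra/Muller-style powerset construction with the parity-game semantics of the evaluation game, carefully tracking how the priority map contributes to the state size. First I would reformulate the evaluation game $\EG(\bbG,\bbS)$ so that, from $\exists$'s perspective, a play is a descent through the graph $(V,E)$ that repeatedly passes through modal nodes; between two consecutive modal nodes, the relevant data is a positional $\exists$-strategy on the boolean/atomic fragment, which picks at each $\lor$-node one successor and is forced at $\land$- and $\Box$-nodes. The key conceptual move is that a state of the disjunctive automaton $\bbA$ should be a \emph{set} of nodes of $\bbG$ — the ``macro-state'' of all $\bbG$-nodes that $\forall$ can still force the token to — together with enough extra bookkeeping (a ``latest appearance record'' or round-robin counter over the at most $k$ priorities) to turn the Muller-type winning condition induced by the powerset construction into a parity condition. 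This is exactly the source of the $2^{n^2 k}$ bound: roughly $2^n$ for the macro-state, and a factor like $(n!)^{\Ord(1)}$ or $k^{\Ord(n)}$, i.e. $2^{\Ord(n\log n)}$ per priority and $k$ of them, for the LAR component, which together sit inside $2^{n^2k}$.

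Concretely I would proceed in the following steps. (1) Define, for the strongly-guarded-ish setting, the notion of a \emph{one-step move} of $\bbG$: fix a macro-state $U\subseteq V$ and a colour $c\in\pow\Prop$; $\exists$ responds by choosing, for each $u\in U$, a local positional strategy resolving the boolean/atomic part below $u$ down to the next layer of modal nodes; this yields for each $u$ a pair $(\Diamond\text{-demands}, \Box\text{-constraint})$ over the set of reachable modal successors, which one packages as a one-step formula that is (after distributing) a disjunction of $\nb$-formulas over subsets of $V$ — i.e. a $\DMLone$ formula over the macro-states. (2) Show this transition is well-defined and finite, giving $\Th: A\times\pow\Prop\to\DMLone(A)$ with $A$ consisting of macro-states decorated by the LAR/priority bookkeeping; bound $\size{A}$ by $2^{n}\cdot 2^{\Ord(nk\log n)}\le 2^{n^2k}$ and $\size{\bbA}$ by $n\cdot 2^{n^2k+l+n}$ by counting subformulas of the (at most $2^l$ many, one per colour) transition formulas, each of size $\Ord(n\cdot 2^n)$. (3) Prove $\bbG\equiv\bbA$ by translating winning strategies back and forth between $\EG(\bbG,\bbS)$ and $\AG(\bbA,\bbS)$: the $\Leftarrow$ direction assembles an $\exists$-strategy in $\EG$ from the macro-state strategy using the relation-lifting semantics of $\nb$ (each $\forall$-choice of a successor pair in $\AG$ corresponds to $\forall$ choosing a successor in the $\Box$/$\Diamond$ game and a node in the macro-state), and the $\Leftarrow$ direction is the nontrivial one. (4) Finally, replace the regular acceptance condition of the intermediate automaton by a genuine parity condition using the LAR component, justifying the footnote and the statement that $\bbA$ is a disjunctive \emph{parity} automaton.

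The main obstacle I expect is step (3), specifically proving that $\exists$ wins $\AG(\bbA,\bbS)$ whenever she wins $\EG(\bbG,\bbS)$, because this is precisely the point where the macro-state (powerset) construction must be shown to be \emph{sound}: one must verify that collapsing all of $\forall$'s possible token positions into a single set does not give $\exists$ more power than she has, and that the priority bookkeeping correctly records the \emph{worst} (highest) priority $\forall$ could force infinitely often along \emph{any} thread inside the evolving macro-states. The standard tool here is an infinite-tree / thread argument: from a winning $\exists$-strategy in $\EG$ one extracts, by a choice/compactness argument over the tree of $\forall$-moves in $\AG$, a consistent family of local strategies whose every infinite thread is a legal $\EG$-play won by $\exists$, hence has even maximal priority; König's lemma plus the fact that there are only $k$ priorities then lets one argue the LAR-augmented play in $\AG$ also satisfies the parity condition. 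Making the thread-tracking precise — in particular handling the unguarded cycles of $\bbG$, where a thread may pass through many $\Dom(\Om)$-nodes without a modal step, so that the ``one-step'' packaging in step (1) must itself already perform a fixpoint-style saturation over $\epsilon$/boolean cycles — is where the real work lies, and is what distinguishes this integrated construction from the naive two-stage one.
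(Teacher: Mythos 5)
Your overall plan is in the same family as the paper's construction (macro-states built from the nodes of $\bbG$, local positional choices at $\lor$-nodes between modal steps, transitions that are disjunctions over such local strategies of $\nb$-formulas packaging the $\dia$/$\Box$ demands), but two concrete points do not go through as you describe them. First, the state bookkeeping: you take a state to be a \emph{set} of nodes decorated with a latest-appearance-record, and you claim this both fits in $2^{n^2k}$ and lets you read off a genuine parity condition. A plain LAR over nodes does not suffice here, because the winning condition lives on \emph{threads} that thread through successive macro-states, and to even define those threads you must remember, from one modal step to the next, which node descends from which node and what the maximal priority was along the connecting (possibly long, unguarded) segment. The paper solves this by making a state a ternary relation $m \sse V \times \Ran(\Om) \times V$ (source, maximal priority, target of a play segment), which is exactly where the $2^{n^2k}$ bound comes from, and it keeps the acceptance condition as the $\om$-regular ``no bad trace'' language $\NBT_\Om$ — note the theorem's footnote: no parity condition is claimed at this stage. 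The conversion to a parity condition is done afterwards by determinizing a word automaton over the macro-state alphabet and taking a wreath product, and it costs an extra $\log(nk)$ factor in the exponent, so the final parity automaton has size $2^{\Ord(n^2k\log(nk))}$, \emph{not} $2^{n^2k}$; your plan to fold the determinization into the states of $\bbA$ within the stated bound is therefore both unproved and stronger than what is true of the paper's construction.

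Second, the unguardedness issue, which you correctly flag as ``where the real work lies,'' is not actually resolved by your step (1). Saturating over $\epsilon$/boolean cycles handles only the \emph{finite} stationary plays (the paper's $e_{\chi}$); the dangerous case is an \emph{infinite} play that loops forever through an unguarded cycle without ever reaching a modal node, so that the automaton makes no further transitions and your acceptance condition never gets to see the offending priorities. The paper's device is the second clause of ``local compatibility'': a local strategy $\chi$ may only be used at $(m,c)$ if the stream $m \comp (e^{-}_{\chi})^{\om}$ carries no bad trace, i.e.\ every infinite stationary continuation of an $m$-play under $\chi$ is already good. Without some such side condition built into the transition map, the construction is unsound for unguarded input: $\eloi$ could win the acceptance game by choosing a local strategy whose stationary play has odd maximal priority, while losing the evaluation game. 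Relatedly, your expectation that the direction from $\eloi$ winning $\EG(\bbG,\bbS)$ to $\eloi$ winning $\AG(\bbA,\bbS)$ needs a compactness/K\"onig argument is off the mark once the states carry the (source, priority, target) information: positional determinacy of the evaluation game gives a positional winning strategy from which the local strategies $\chi_s$ and $\eloi$'s moves in $\AG$ are read off directly, and the trace bookkeeping in the macro-states does the rest.
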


%\begin{convention}
{\noindent \em Convention.} 
In the remainder of this section we fix a parity formula $\bbG$  
with $\bbG = (V,E,L,\Om,v_I)$ with $\size{V} = n$ and $\size{\Ran(\Om)}
= k$.
It will be convenient to make the following assumptions on $\bbG$:
(i) $\Om$ is total, 
(ii) $L^{-1}(\epsilon) = \nada$, and 
(iii) $E[v] \neq \nada$ if $L(v) \in \{ \land, \lor \}$.
We leave it for the reader to convince themselves that this is without loss of generality.
Furthermore we define $V_{\lor} \isdef L^{-1}(\lor)$, etc.
% $V_{m} \isdef V_{\dia} \cup V_{\Box}$, $V_{b} \isdef V_{\land} \cup V_{\lor}$,
% and $V_{p} \isdef L^{-1}[\Lit\Prop \cup \{ \bot,\top \}]$.
%and we let $V_{m}, V_{b}$ and $V_{p}$ denote the collections of modal, boolean 
%and propositional nodes, respectively.
%\end{convention}
We now turn to the proof of Theorem~\ref{t:sim}. 
% The construction of a disjunctive modal automaton 
% will be a kind of power set construction. Unlike the well-known power set consturction for DFAs, however, it
% is not sufficient to record the current set of states that have been reached during a computation, but we also have to be able
% to reconstruct the possible  traces that led to those states. This is achieved by using binary relations as macrostates: intuitively, 
% the range of such a relation encodes the current set of states while the relation itself explains for each state in the range from where
% it could have been reached. 

\subsection{Macrostates}

We shall construct the simulating automaton via a powerset construction.
That is, for the states of $\bbA$ we will in principle take subsets of $\bbG$.
However, in order to handle infinite matches correctly we need to store some
more information in these states: A state of $\bbA$ will be a a \emph{macrostate} over $\bbG$, 
that is, a ternary relation $m \sse V \times \Ran(\Om) \times V$, representing
various pieces of information.
Basically, each triple $(u,k,v) \in m$ represents the projection on $\bbG$ of
a partial play of the evaluation game of $\bbG$ which starts at $u$, ends at
$v$, and reaches $k$ as its highest priority (after $u$).
More precisely, the triple $(u,k,v)$ represents a path $\pi$ through $\bbG$ with 
$\first(\pi) = u$, $\last(\pi) = v$, and such that $k$ is the highest priority 
reached on $\pi$ after $u$.
Consequently, one single match of the acceptance game of $\bbA$ on a pointed 
Kripke structure will represent a certain bundle of matches of evaluation game 
of $\bbG$.

Before defining and discussing macrostates formally, we need some auxiliary
terminology.

\begin{definition}
\label{d:incst}
A subset $U \sse V$ is \emph{inconsistent} if there is 
$u \in U$ with $L(u) = \bot$, or if there are nodes $u,v \in U$ 
with $L(u) = p$ and $L(v) = \ol{p}$ for some $p \in \Prop$.
Given a color $c \in \pow\Prop$ we say that $U$ is \emph{compatible}
with $c$ if $L(u) \neq \bot$,
$L(u) = p$ implies $p \in c$, and $L(u) = \ol{p}$ implies $p \not\in c$, 
for all $u \in U$ and $p \in \Prop$.
\end{definition}

\begin{definition}
%Given a map $\Om: V \to \om$ on a finite set $V$, 
We define the set $M_\Om$ of \emph{macrostates} of $\bbG$  
by putting
$
% M_{\Om} \isdef \big\{ (u,k,v) \in \pow(V \times \Ran\Om \times V) \mid
%    k \geq \Om(v) \big\}.
M_{\Om} \isdef \pow(V \times \Ran(\Om) \times V)$.
%Elements of $M_{\Om}$ will be called \emph{macrostates}; we define 
The \emph{range} $\Ran(m)$ of a macrostate $m \in M_{\Om}$ is the set of all
$v \in V$ such that $(u,k,v) \in m$ for some $u \in V$ and $k \in \Ran(\Om)$.
With $m,m' \in M_{\Om}$, we define the \emph{composition} $m\comp m' \in 
M_{\Om}$ as the set of triples $(v,k,v'') \in V \times \Ran(\Om) \times V$ for
which we can find
$(v,k',v') \in m$ and $(v',k'',v'') \in m'$ such that $k = \max(k',k'')$.
For a subset $U \sse V$, we define $\Diag_{U} \isdef \{ (u,0,u) \mid u \in U \}$;
where $v \in V$, we abbreviate $\Diag_{v} \isdef \Diag_{\{ v \}}$.

A macrostate $m$ is called \emph{consistent}, respectively \emph{compatible 
with a color $c \in \pow\Prop$}, if $\Ran(m) \sse V$ satisfies the mentioned 
property w.r.t.~$c$, in the sense of Definition~\ref{d:incst}.

Given a stream $\al = (m_{i})_{i\in\om}$ of macrostates, we say that a stream 
$(v_{i},k_{i})_{i\in\om} \in (V\times\Ran(\Om))^{\om}$ is a \emph{trace on} $\al$
if $(v_{i},k_{i},v_{i+1}) \in m_{i}$, for all $i \in \om$.
Such a trace is \emph{good} (\emph{bad}, respectively) if the maximum number $k$ 
occurring as $k_{i}$ for infinitely many $i$ is even (odd, 
respectively).
We let $\NBT_{\Om}$ denote the collection of $M_{\Om}$-streams that do not 
carry a bad trace.
\end{definition}

\begin{proposition}\label{prop:parity}
%Let $\Om: V \to \om$ be a priority map with $\size{V} = n$ and $\size{\Ran\Om}
%= k$.
The set $\NBT_{\Om}$ is an $\om$-regular language over $M_{\Om}$.
Concretely, there is a deterministic parity automaton $\bbP$ over
$M_{\Om}$ such that $\Lom(\bbP) = \NBT_{\Om}$ and $\bbP$ has size 
$2^{\Ord(nk\cdot \log(nk))}$ and index $\Ord(nk)$.
\end{proposition}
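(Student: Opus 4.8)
The statement claims two things: that $\NBT_{\Om}$ is $\om$-regular, and a concrete size/index bound for a deterministic parity automaton recognising it. My plan is to build this automaton directly, rather than via a closure construction from a nondeterministic one, so that I can control the size precisely.

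The key observation is that $\NBT_{\Om}$ is a "no bad trace" condition, which is a universal condition over traces. Read an $M_{\Om}$-stream $(m_i)_{i\in\om}$; at each point we must remember enough to detect whether *any* trace through the $m_i$ is bad. A trace is a path $(v_i, k_i)$ with $(v_i,k_i,v_{i+1})\in m_i$, and badness depends only on the maximum priority seen infinitely often. The natural deterministic device is a "latest appearance record"-style construction combined with a subset component. Concretely, I would first argue that one can track, for each vertex $v\in V$ reachable as the "current endpoint" of some trace prefix, the best (from $\forall$'s perspective, i.e.\ the way to keep a trace alive that is most dangerous) information about priorities. The cleanest route: observe that the macrostates $m_i$ compose (the composition operation $\comp$ is already defined), and that a trace is bad iff in the sequence of composed relations, restricted appropriately, some vertex keeps recurring while the maximal priority along the connecting segments is odd infinitely often. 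This is exactly the kind of condition handled by a Safra-free determinisation of a Büchi/parity condition on a finite "trace graph" of size $\le |V|\cdot|\Ran(\Om)| = nk$.

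The main technical step is therefore: reduce "$(m_i)_{i\in\om}$ carries a bad trace" to acceptance by a *nondeterministic* parity automaton with state set of size $\Ord(nk)$ and index $\Ord(k)$ — the automaton guesses a bad trace, its states being pairs $(v,k)$, with priority reflecting $k$ suitably (odd priorities to force the maximum-infinitely-often to be odd). Then $\NBT_{\Om}$ is the complement of this, so I can complement at the automaton level. Complementing a parity automaton with $s$ states and index $j$ yields a deterministic parity automaton of size $s^{\Ord(s)}$ or, using the recent quasi-polynomial determinisation/complementation results, size $2^{\Ord(s\log s)}$ with index $\Ord(s)$. Plugging $s = \Ord(nk)$ gives size $2^{\Ord(nk\log(nk))}$ and index $\Ord(nk)$, matching the claimed bounds. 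I should double-check that the intermediate alphabet $M_{\Om}$, of exponential size $2^{\Ord(n^2k)}$, does not enter the state-count bound — it does not, since transitions are determined by reading a letter and the state space is what matters for the size of $\bbP$ as counted here, but I would state this carefully, perhaps bounding $\size{\bbP}$ as the number of states (consistent with how "size" is measured for automata elsewhere in the paper, i.e.\ essentially the state set plus transition structure, which here is polynomial in states times a factor absorbed into the $\Ord$).

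**Expected main obstacle.** The delicate part is the reduction of "carries a bad trace" to a small nondeterministic parity automaton, and getting the index right. Naively, tracking the maximum priority seen so far along a guessed trace does not work for an $\om$-condition (we care about infinitely-often, not eventually); the standard fix is for the automaton to re-guess, at each step, the value $k$ that it claims is the $\limsup$ of the trace's priorities, checking locally that from some point on no larger priority is used and that $k$ itself recurs — this is implemented with priorities so that "$k$ claimed and $k$ is odd" is rewarded and "a priority $>k$ appears after the claim" is punished. Verifying that this captures exactly the bad traces, and that the resulting index is $\Ord(k)$ (hence $\Ord(nk)$ after the determinisation blow-up, the vertex component $V$ contributing only to the state count and not to the index), is where the real care is needed. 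I would cite the relevant determinisation/complementation machinery for parity automata (e.g.\ via the quasi-polynomial constructions referenced earlier in the paper) rather than re-proving it, so the argument reduces to this reduction plus a size bookkeeping.
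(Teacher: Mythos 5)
Your proposal follows essentially the same route as the paper: construct a small nondeterministic word automaton (with $\Ord(nk)$ states) recognising exactly the $M_{\Om}$-streams carrying a bad trace, then obtain $\bbP$ by standard determinisation into a parity automaton with $2^{\Ord(nk\log(nk))}$ states and $\Ord(nk)$ priorities, and complement by shifting all priorities by one. The only differences are cosmetic: the paper's intermediate automaton has state set $V$ with priorities $\Om(v)+1$ and is first converted to a B\"uchi automaton of size $\Ord(nk)$ before determinising, whereas you build the $\Ord(nk)$-state automaton directly from pairs $(v,k)$ with a limsup guess (and the determinisation bound is the classical Safra/Piterman/Schewe one, not a quasi-polynomial result); the size bookkeeping is the same.
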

\begin{proof}
    We first observe that there is a non-deterministic parity word automaton
    $\bbW=(Q,q_I,\Delta_\bbW, \Om_\bbW)$ that accepts the language $\mathcal{L}_\mathrm{bad} \coloneqq 
    (M_{\Om})^\omega \setminus  \NBT_{\Om}$, i.e., all infinite streams
    of macrostates that do contain a bad trace. 
    To define $\bbW$ we put $Q \coloneqq V$, $q_I \coloneqq v_I$, $\Delta_\bbW(v,m) \coloneqq \{u \in V \mid \exists k. (v,k,u) \in m \}$
    and $\Om_\bbW(v) \coloneqq \Om(v) + 1$ for all $v \in V$ and all $m \in M_\Om$. It is easy to verify $\bbW$ is a parity automaton
    with $n$ states and index $k$ that accepts $\mathcal{L}_\mathrm{bad}$. Standard constructions can be used to first
    transform $\bbW$ into an equivalent non-deterministic B\"uchi automaton $\bbW'$ of size $\Ord(n \cdot k)$ (cf. e.g.~\cite{grae:auto02})
    which can be in turn transformed into an equivalent deterministic parity word automaton $\bbW''$
    that meets the size bounds of the proposition (cf.~\cite{pite:from07,sche:tigh09}). The automaton $\bbP$ is now constructed as the deterministic 
    parity word automaton
    that accepts the complement of the language of $\bbW''$ by adding 1 to all the priorities of states in
    $\bbW''$.
\end{proof}

\subsection{Local strategies \& the disjunctive modal automaton}

In our approach of dealing with the possible unguardedness of the input parity
formula, the key concept is that of a (positional) \emph{local strategy} for 
$\eloi$.
A local strategy represents a complete set of choices of $\eloi$ for all 
disjunction nodes in $\bbG$.
Intuitively, one may think of a local strategy as some part of a positional 
strategy of $\eloi$ where we stay put at a point in the model.
More precisely, a local strategy $\chi$ induces, in the evaluation game of 
$\bbG$, for any point in the model and any vertex in $\bbG$, a unique (partial)
play that does not leave the mentioned point and stops whenever a modal vertex 
in $\bbG$ is met.
The projections of these matches
% that in fact do not even depend on the point or the model, 
will be called \emph{stationary plays}.
Formally we define these notion, together with some related concepts that we 
will discuss in a moment, as follows.

\begin{definition}
\label{d:ls}
A \emph{local strategy} on $\bbG$ is a map $\chi: V_{\lor} \to V$ such that 
$\chi(v) \in E[v]$, for all $ v \in V_{\lor}$. The collection of local 
strategies on $\bbG$ is denoted by $\LS_{\bbG}$.

Now fix such a local strategy $\chi$.
Given a vertex $v \in V$ we define the set $\mathit{SP}_{\chi}(v)$ of 
\emph{stationary $\chi$-plays from $v$} as the smallest set $S \sse V^{*}$
such that 
\\(1a) if $v \in V_{\lor}$ then $v\cdot\chi(v) \in S$; 
\\(1b) if $v \in V_{\land}$ then $v\cdot w \in S$, for each $w \in E[v]$;
\\(2a) if $\pi \in S$ and $u \isdef \last(\pi) \in V_{\lor}$, then 
   $\pi\cdot\chi(u) \in S$; and 
\\(2b) if $\pi \in S$ and $u \isdef \last(\pi) \in V_{\land}$, then
   $\pi\cdot w \in S$, for each $w \in E[u]$.
\\Given $\pi = v v_{1}\cdots v_{k} \in \mathit{SP}_{\chi}(v)$, define 
$\wt{\Om}(\pi) \isdef \max\{ \Om(v_{i}) \mid 1 \leq i \leq k \}$.
Via these stationary plays we define the following macrostates:   
\[\begin{array}{llll}
e_{\chi}^{-} & \isdef &
 \big\{ (v,n,u)  & \mid v \in V_{b} \text{ and } u = \last(\pi)
   \text{ for some } \pi \in \textit{SP}_{\chi}(v) 
  \text{ with } n = \wt{\Om}(\pi)
  \big\} 
% \\ e_{\chi} & \isdef & e_{\chi} \cup \Diag_{V_{p}} \cup \Diag_{V_{m}}
\\ e_{\chi} & \isdef & e_{\chi}^{-} \cup \Diag_{V}
\end{array}\]
We say that on a macrostate $m$, $\chi$ is \emph{locally compatible} with a
color $c \in \pow\Prop$ if (i) $\Ran(m \comp e_{\chi})$ is compatible with $c$
and (ii)  the stream $m\comp (e^{-}_{\chi})^{\om}$ does not contain any bad trace.
% A macro-state $a$ is \emph{locally tenable} if it admits at least one locally
% surviving strategy.
\end{definition}

Here are some intuitions about these notions.
First, note that $\mathit{SP}_{\chi}(v)$ only contains paths of length $> 1$, 
corresponding to matches where actually a move is made at $v$.
The macrostate $e_{\chi}^{-}$ represents the relevant information about all 
finite stationary $\chi$-plays; hence, the collection of all \emph{infinite}
stationary $\chi$-plays corresponds to the set of all traces over the 
$M_{\Om}$-stream $(e_{\chi}^{-})^{\om}$.

Now fix a macro-state $m$. 
The macrostate $m \comp e_{\chi}^{-}$ represents the version of $m$ that absorbs
all continuations of matches in $m$ with one of these finite stationary 
$\chi$-plays; thus the set $m \cup (m \comp e_{\chi}^{-})$ represents all triples
in $m$ that are possibly continued with such a play.
For technical reasons it is convenient to define this set using the macro-state 
$e_{\chi}$, in the sense that $m \comp e_{\chi} 
= m \comp (e_{\chi}^{-} \cup \Diag_{V}) 
= (m \comp e_{\chi}^{-}) \cup (m \comp \Diag_{V}) = (m \comp e_{\chi}^{-}) \cup 
m$.
The stream $m \comp (e_{\chi}^{-})^{\om}$ represents all \emph{infinite} plays
that start with an $m$-play, and then continue with an infinite stationary 
$\chi$-play.
To say that there is a bad trace on such a stream means that for some $v \in 
\Ran(m)$, $\eloi$'s \emph{opponent} $\abel$ has a strategy such that, played 
against $\chi$, the resulting match (path through the graph of $\bbG$) is bad, 
in the sense of the highest priority met infinitely often being odd.
To say that, relative to $m$, $\chi$ is locally compatible with a color $c$ 
indicates, roughly, that after any $m$-match, if the local point of the Kripke 
model has color $c$, then it is safe for $\eloi$ (until the next modal vertex
in $\bbG$ is encountered) to continue by playing $\chi$.
Finally, it may also be useful to observe that for all $m$ and $\chi$ we find
$\Ran(m) \cap V_{p} \sse \Ran(m\comp e_{\chi}) \cap V_{p}$ and
$\Ran(m) \cap V_{m} \sse \Ran(m\comp e_{\chi}) \cap V_{m}$.
\medskip

We turn to the definition of the disjunctive modal automaton $\bbA_{\bbG}$.
For the definition of its transition map $\Th$, note a macrostate $m$ may be
thought of as representing the conjunction of states in $\Ran(m)$ that are 
visited by ``parallel'' plays of the evaluation game for $\bbG$. 
The transition function $\Th$ %of the automaton 
will implement the intuition that
a modal step needs to satisfy the ``demands'' posed by the modal nodes in 
$\Ran(m)$.
These demands are formulated separately for all box nodes and for each individual
diamond node.

\begin{definition}
Let $m \in M_{\Om}$ be some macrostate, and let $x \in \Ran(m) \cap 
V_{\dia}$.
Then we define 
\[\begin{array}{lll}
d_{\Box}(m)  &\isdef& 
   \{ (u,\Om(v),v) \mid u \in \Ran(m) \cap V_{\Box} \text{ and }
   v \in E[u] \},
\\ d_{x}(m)  &\isdef& 
   \{ (x,\Om(v),v) \mid v \in E[x] \} \cup d_{\Box}(m).
\end{array}\]
\end{definition}

The macrostates $d_{\Box}(m)$ and $d_{x}(m)$ correspond to, respectively, the 
\emph{universal} and \emph{existential} requirements made by the vertices in 
the range of $m$.

\begin{definition}\label{def:simulauto}
Let $\bbG = (V,E,L,\Om,v_I)$ be a parity formula.
% \btbs
% \item
% \textcolor{red}{In our definition of a parity formula the initial vertex 
% is part of the strucure of a parity formula.}
% \etbs
We define the automaton $\bbA_{\bbG} = (A,\Th, \Acc, m_I)$ as follows.
To start with, its carrier is the collection of all macrostates:
$A \isdef M_{\Om}$
and its initial state is given as $m_I = \Diag_{v_{I}} = \{ (v_I,0,v_I)\}$.
The transition map $\Th$ is given as follows.
For a macrostate $m \in M_{\Om}$ and a local strategy $\chi$ we  define
 $A_{m,\chi} \sse M_{\Om}$:
\begin{eqnarray*}
% \begin{array}{lll}
A_{m,\chi} & \isdef & \{ e_{\chi} \comp d_{\Box}(m \comp e_{\chi}) \} 
\;\cup\;
   \{ e_{\chi} \comp d_{x}(m \comp e_{\chi}) \mid 
      x \in \Ran(m\comp e_{\chi}) \cap V_{\dia}\},
\end{eqnarray*}
and for a macrostate $m \in M_{\Om}$ and a color $c \in \pow\Prop$ we put: 
\[ \Th(m,c)  \isdef \bv \big\{ \theta(m,.c,\chi)
   \mid \chi \in \LS_{\bbG} \text{ is locally compatible with $c$ on } m 
   \big\},
\]
where
\[ \theta(m,c,\chi) \isdef 
\left\{ \begin{array}{ll}
   \nb A_{m,\chi}
   & \text{if } \Ran(m\comp e_{\chi}) \cap V_{\dia} \neq \nada
\\ \nb A_{m,\chi} \lor \nb\nada 
   & \text{if } \Ran(m\comp e_{\chi}) \cap V_{\dia} = \nada.
\end{array}\right.
\]
Finally, for its acceptance condition $\Acc$ we take the $\om$-regular language
$\NBT_{\Om}$, i.e, an infinite play of $\AG$ will be winning for $\eloi$ if
the corresponding stream of macrostates is in $\NBT_\Om$.
\end{definition}

To get some intuitions: to define $\Th(m,c)$, we nondeterministically \emph{guess}
a local strategy $\chi$ that is compatible with $c$ on $m$ -- this 
guess is reflected by the disjunction in the definition of $\Th(m,c)$.
For each such $\chi$, we absorb its stationary plays into $m$ and turn to the 
set of modal nodes in the range of the resulting macro-state $m \comp e_{\chi}$.
We gather the universal and existential requirements of $m \comp e_{\chi}$ into
an appropriate collection $A_{m,\chi}$ of ``next'' macro-states.
This set $A_{m,\chi}$ is then to be \emph{covered} by the collection of 
successors of the point in the Kripke model under inspection, as encoded by the 
formula $\nb A_{m,\chi}$.
In the special case where $m \comp e_{\chi}$ makes no existential demands (i.e., 
if $\Ran(m\comp e_{\chi}) \cap V_{\dia} = \nada$), we add the disjunct 
$\nb\nada \equiv \Box\bot$, allowing for the possibility that the point has no 
successors at all.
To see how this all works out precisely, the reader is advised to look at the
proof of Proposition~\ref{p:simeq} below.

\subsection{Proof of Theorem~\ref{t:sim}}
Turning to the proof of the main theorem, we first show that the disjunctive
modal automaton from Definition~\ref{def:simulauto} is equivalent to the parity
formula $\bbG$ that we started with. After that we prove Theorem~\ref{t:sim} by
showing that $\bbA_{\bbG}$ is within the desired size bounds.

\begin{proposition}
 \label{p:simeq}
For any parity formula $\bbG$, we have
$
\bbG \equiv \bbA,
$
where $\bbA \coloneqq \bbA_{\bbG}$ is given as in Definition~\ref{def:simulauto}. 
%and
%$\Diag_{v_{I}} = \{ (v_I,0,v_I)\}$.
\end{proposition}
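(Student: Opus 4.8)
The plan is to prove the two implications separately via strategy transfer between the evaluation game $\EG(\bbG,\bbS)$ and the acceptance game $\AG(\bbA_{\bbG},\bbS)$, for an arbitrary pointed model $(\bbS,s_I)$. The bridge in both directions is the intuition, already stressed in the paper, that a single macrostate $m$ stands for a bundle of partial plays of $\EG(\bbG,\bbS)$ all sitting at one point of $\bbS$, where a triple $(u,k,v)\in m$ records the first node $u$, the last node $v$, and the highest priority $k$ seen after $u$. A play of $\AG(\bbA_{\bbG},\bbS)$ thus ``is'' a bundle of plays of $\EG(\bbG,\bbS)$, and the acceptance condition $\NBT_{\Om}$ says precisely that no thread in this bundle is a losing $\abel$-play; hence the traces of Definition~\ref{d:ls} correspond to projections of $\EG$-matches, and goodness of traces corresponds to $\eloi$ winning those matches.

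\textbf{From $\bbG$ to $\bbA$.} Suppose $\eloi$ has a winning strategy $f$ in $\EG(\bbG,\bbS)$ from $(v_I,s_I)$; we may take $f$ positional. I would build a strategy $f'$ for $\eloi$ in $\AG(\bbA,\bbS)$ by maintaining, at each automaton position $(m,s)$, the invariant that every triple $(u,k,v)\in m$ is the $\bbG$-projection of an $f$-conform partial play from $(u,s_0)$ to $(v,s)$ (for the relevant earlier point $s_0$) whose highest post-$u$ priority is $k$, and moreover $v$ is a node at which $f$ is ``about to move'' (a modal or atomic node). From $m$, the local strategy $\chi$ that $\eloi$ guesses is simply $f$ restricted to $V_{\lor}$: since $f$ is winning, $\chi$ is locally compatible with $c_s$ on $m$ — condition (i) holds because $f$ never lets a play reach a false literal or $\bot$, and condition (ii) holds because an infinite stationary play that is bad would be an $f$-conform $\EG$-match lost by $\eloi$. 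Absorbing the finite stationary $\chi$-plays turns $m$ into $m\comp e_\chi$, whose modal nodes carry exactly the modal obligations $f$ faces; the diamond moves prescribed by $f$ (picking, for each diamond node in $\Ran(m\comp e_\chi)$, a successor in $R[s]$) together with the universal spread over all successors for the box nodes yield a witnessing relation $Z$ with $(A_{m,\chi},R[s])\in\ol Z$, and one checks the new macrostates $e_\chi\comp d_x(m\comp e_\chi)$, $e_\chi\comp d_\Box(m\comp e_\chi)$ satisfy the invariant at the successor points. For the win: any infinite $\AG$-play consistent with $f'$ gives a stream of macrostates $(m_i)_i$, and a bad trace on it would, by the invariant, be assembled into an infinite $f$-conform $\EG$-match with odd highest priority, contradicting that $f$ is winning; hence the stream lies in $\NBT_{\Om}=\Acc$ and $\eloi$ wins.

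\textbf{From $\bbA$ to $\bbG$.} Conversely, given a winning strategy $g'$ for $\eloi$ in $\AG(\bbA,\bbS)$ from $(m_I,s_I)$, I would extract a winning strategy $g$ in $\EG(\bbG,\bbS)$ from $(v_I,s_I)$. The idea is to have $\abel$, in the automaton game, ``chase'' whatever node $\abel$ picks in the formula game: a position $(v,s)$ of $\EG$ is simulated together with an automaton position $(m,s)$ and a triple $(v_I\text{ or earlier root},k,v)\in m$ tracking the current thread. At a $\lor$-node $v$, $\eloi$ plays the move dictated by the $\chi$ that $g'$ guessed at the ambient $(m,c_s)$; at a $\land$-node, whichever child $\abel$ picks is still a last-coordinate of some triple in $m\comp e_\chi$ (by clauses (1b),(2b) of stationary plays); at a $\dia$-node $x\in\Ran(m\comp e_\chi)$, $\eloi$ consults the $\nb$-move (the relation $Z$ witnessing $(A_{m,\chi},R[s])\in\ol Z$) that $g'$ makes, follows it into $d_x(m\comp e_\chi)$ at a successor $s'$ where the thread continues as $(x,\Om(v),v)$ with $v\in E[x]$; at a $\Box$-node, $\abel$'s choice of successor is covered by $d_\Box$. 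Finite $\EG$-plays conform to $g$ end only when $\abel$ gets stuck (a true literal or $\top$), since $\eloi$ never stalls. For an infinite $g$-conform $\EG$-match, its node-priority sequence is a trace on the associated macrostate stream $(m_i)_i$, which is in $\NBT_{\Om}$ because $g'$ is winning; a non-bad trace has even highest-priority-infinitely-often, so $\eloi$ wins the $\EG$-match.

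\textbf{The main obstacle} is the bookkeeping in the $\bbA\to\bbG$ direction: a single $\AG$-play encodes many $\EG$-threads at once, so one must carefully index which triple of the current macrostate is being simulated and verify that the composition operations ($m\comp e_\chi$, then $\comp d_x(m\comp e_\chi)$ or $\comp d_\Box$, prefixed again by $e_\chi$ at the next round) splice the recorded highest-priority correctly — in particular that the $\max$ in the composition and the fresh priority $\Om(v)$ inserted at a modal step together reconstruct exactly the highest priority seen along the $\EG$-thread since its last ``reset''. Getting this synchronisation right, so that an $\EG$-trace is literally a trace in the sense of Definition~\ref{d:ls} on the macrostate stream, is where the real work lies; once that correspondence is nailed down, the equivalence of the winning conditions (bad trace $\leftrightarrow$ $\eloi$ loses the thread, via $\NBT_{\Om}=\Acc$ and Proposition~\ref{prop:parity}) is immediate.
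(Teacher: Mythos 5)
Your plan is indeed the paper's plan: strategy transfer in both directions, with the local strategy in the forward direction read off from the positional winning strategy $f$ at the current point, local compatibility derived from $f$ being winning, and an invariant identifying triples of the current macrostate with projections of $f$-conform partial plays; and in the backward direction a thread-tracking simulation through the $\nb$-moves (the paper additionally assumes $\bbS$ $\omega$-expanded so that macrostate and local strategy are determined by the point, but carrying the associated $\AG$-match along, as you do, serves the same purpose). However, there is one genuine gap, in the $\bbA\to\bbG$ direction, and it concerns exactly the situation the whole construction exists for. You dispose of all infinite $\EG$-matches by saying that their node-priority sequence is a trace on the associated macrostate stream $(m_i)_i$, which lies in $\NBT_{\Om}$ because $g'$ is winning. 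That argument only applies when the $\EG$-match passes through \emph{infinitely many} modal positions, so that the associated $\AG$-play is itself infinite. Since $\bbG$ may be unguarded, an infinite $\EG$-match may instead get trapped forever at a single point $s$ after finitely many modal steps; then the associated $\AG$-play is finite, and the acceptance condition $\NBT_{\Om}$, which constrains only infinite $\AG$-plays, gives you nothing. For this case one must argue differently: the tail of the match is an infinite stationary $\chi_s$-play, so the corresponding node sequence is a trace on the stream $m\comp(e^{-}_{\chi_s})^{\om}$, and it is clause (ii) of \emph{local compatibility} (available because the disjunct of $\Th(m,c_s)$ chosen by $g'$ corresponds to a locally compatible $\chi_s$) that forbids a bad trace there, whence the tail is good and $\eloi$ wins. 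The paper's proof makes precisely this case split (infinitely versus finitely many modal positions); without the second case the argument fails for unguarded $\bbG$, which is the only interesting case.

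Two smaller inaccuracies, neither fatal: in the forward direction your invariant requires the last coordinate of every triple in the current macrostate to be a modal or atomic node, which is not true (the macrostates $e_\chi\comp d_x(m\comp e_\chi)$ arising in the game have in their range arbitrary successors of modal nodes, and $e^{-}_{\chi}$ records stationary plays of every length, so boolean nodes occur); the clause is not needed and should be dropped. In the backward direction, at a $\Box$-node the macrostate that the $\nb$-move associates with the successor chosen by $\abel$ need not be $e_\chi\comp d_\Box(m\comp e_\chi)$ itself but may be some $e_\chi\comp d_x(m\comp e_\chi)$ with $x$ a diamond node; since $d_x(m')\supseteq d_\Box(m')$ the box-thread is still tracked, but this is a step you should make explicit rather than asserting that the choice ``is covered by $d_\Box$''.
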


\begin{proof}
   We show that \vspace{-2mm}
\begin{equation}
\label{eq:simeq}
(v_{I},s_{I}) \in \Win_{\eloi}(\EG(\bbG,\bbS)) \text{ iff }
(m_{I},s_{I}) \in \Win_{\eloi}(\AG(\bbA,\bbS)),
\end{equation}
where $(\bbS,s_{I})$ is an arbitrary but fixed pointed model, $m_{I} \isdef 
\Diag_{v_{I}}$, and we write $\Win_{\eloi}$ for the set of winning positions
for $\eloi$ in a game.
In the sequel we will abbreviate $\EG \isdef \EG(\bbG,\bbS)$ and $\AG 
\isdef \AG(\bbA,\bbS)$.
%\medskip

For the direction from left to right of \eqref{eq:simeq}, fix a positional 
winning strategy $f$ for $\eloi$ in $\EG$.
For any point $s \in S$, we may associate a map $\chi_{s}: V_{\lor} \to V$ 
with $f$, as follows.
Given a vertex $u \in V_{\lor}$, if $(u,s) \in \Win_{\eloi}(\EG)$ then 
$\chi_{s}$ maps $u$ to the element $v \in E[u]$ such that $(v,s) = f(u,s)$; 
if $(u,s) \not\in \Win_{\eloi}(\EG)$ we define $\chi_{s}(u)$ to be an arbitrary 
element in $E[u]$. Clearly $\chi_{s}$ is a local strategy in the sense of
Definition~\ref{d:ls}, and it is not hard to prove that $\chi_{s}$ is locally 
compatible with the color $\V(s)$ of $s$, on any $m \in M_{\Om}$ such that 
$(v,s) \in \Win_{\eloi} (\EG)$ for any $v \in \Ran(m)$.
We define the following (positional) strategy $f'$ for $\eloi$ in $\AG$.
Let $\Si$ be a partial $\AG$-match with $\last(\Si) = (m,s)$.
In case $\Ran(m) \times \{s\} \not\sse \Win_{\eloi}(\EG)$ then $\eloi$ plays 
randomly (one may show that this will never happen).
If, on the other hand, $\Ran(m) \times \{s\} \sse \Win_{\eloi}(\EG)$, then we already saw
that $\chi_s$ is locally compatible with $\V(s)$ on $m$.
Compute $e \isdef e_{\chi_{s}}$ and recall that with each element $(u,k,v)
\in e$ we may associate a partial $f$-guided match $\pi: (u,s) \cdots (v,s)$,
which is stationary at $s$ and such that $k$ is the highest priority met on
$\pi$ after $u$: $k = \wt{\Om}(\pi)$.
It is then not hard to see that, for an arbitrary element $w \in \Ran(m \comp e)$
we have $(w,s) \in \Win_{\eloi}(\EG)$.
In particular, if such a $w$ belongs to $V_{\dia}$, then $\eloi$'s winning
strategy $f$ in $\EG$ selects a successor $t_{w} \in R[s]$.
This assignment, $V_{\dia} \ni w \mapsto t_{w}$, determines $\eloi$'s strategy
$f'$ in $\AG$.
That is, at any partial match $\Si$ with $\last(\Si) = (m,s)$, let 
$\eloi$ play %\vspace{-2mm}
\begin{equation}
\label{eq:xx1}
f'(m,s) \isdef 
\Big\{ \big(e \comp d_{\Box}(m\comp e), t\big) \mid t \in R[s] \Big\} \cup \Big\{ \big(e \comp d_{w}(m\comp e), t_{w}\big) \mid 
  w \in V_{\dia} \cap \Ran(m \comp e) \Big\}
.
\end{equation}
It is easy to see that this move is legitimate. 
%The relatively straightforward 
%proof that the strategy is winning is omitted for space reasons. 
To show that $f'$ is  actually winning for $\eloi$ 
%in the match 
%$\AG$ initialized at $(m_{I},s_{I})$. 
consider an arbitrary $f'$-guided partial match $\Si = (m_{0},s_{0})\cdots(m_{n},s_{n})$ with $(m_{0},s_{0}) = 
(m_{I},s_{I})$.
Via an inductive proof one can show that 
% $\Dom(a_{i+1}) \sse 
$\Ran(m_{i}) \sse \{ v \in V \mid (v,s_{i}) \in \Win_{\eloi}(\EG)\}$,
for each $i \leq m$.
Therefore $\eloi$'s moves in $\Si$ are indeed legitimately guided as 
in~\eqref{eq:xx1} and she wins every finite $f'$-guided match of $\AG$ starting at $(m_{I},s_{I})$.

To see that $\eloi$ also wins the infinite $f'$-guided matches, let $\Si = 
(m_{i},s_{i})_{i\in\om}$ be an arbitrary such match, and consider an arbitrary 
trace $(v_{i},k_{i})_{i\in\om}$ on $\Si$.
Let $k$ be the maximum number occurring as $k_{i}$ for infinitely many $i$;
it then suffices to show that $k$ is even.
For every $i < \om$ one may associate
% and with every triple $(u,k,v) \in a_{i}$,
% \begin{enumerate}
% \item
% for every $i \leq m$ the map $\chi_{s_{i}}$, induced by $\eloi$'s strategy $f$
% at the node $s_{i}$, is a local strategy that is compatible with $a_{i}$.
% \item 
$f$-guided partial $\EG$-matches $\si_{i} =
(u^{i}_{0},s_{i})(u^{i}_{1},s_i) \cdots (u^{i}_{n_{i}},s_{i})$
and $\si_{i}^{+} \isdef \si_{i}\cdot(u^{i+1}_{0},s_{i+1})$
such that $v_{i} = u^{i}_{0}$, $v_{i+1} = u^{i+1}_{0}$ and $k = 
\wt{\Om}(\si_{i}^{+})$.
% \end{enumerate}
Putting these partial plays together, with the trace $(v_{i},k_{i})_{i\in\om}$
we have thus associated a (full)
infinite $f$-guided $\EG$-match $\si = \si_{0}\si_{1}\cdots$, such that 
$\si_{i} = (u^{i}_{0},s_{i})(u^{i}_{1},s_{i}) \cdots (u^{i}_{n_{i}},s_{i})$
and $k_{i} = \max\big\{ \Om(\si^{i}_{1}),\ldots,\Om(\si^{i}_{n_{i}}),
\Om(\si^{i+1}_{0})\big\}$.
This means that $k$ is the highest priority that occurs infinitely often in $\si$,
and since $\si$ is guided by $\eloi$'s winning strategy $f$, $k$ must be even
indeed.

For the other direction of \eqref{eq:simeq}, fix a winning strategy $h$ 
for $\eloi$ in $\AG$.
W.l.o.g. we may assume that $\bbS$ is an $\omega$-expanded
tree\footnote{%
   A pointed model $(\bbS,s_{I})$ is \emph{$\omega$-expanded} if $R$ is the 
   parent-child relation of a tree $(S,R)$ which has $s_{I}$ as its root, 
   and is such that every non-root node $s$ in $\bbS$ has at least $\omega$ 
   many bisimilar siblings.
   It is not hard to see that every pointed model can be unravelled to a 
   bisimilar model that is $\omega$-expanded.
   }
with root $s_{I}$, so that with each $s \in S$ we may associate a 
\emph{unique} state $m_{s}$ such that $(m_{s},s)$ can be reached during an 
$h$-guided match of $\AG$ starting from $(m_{I},s_{I})$.
By definition of $\bbA_{\bbG}$ and $\AG$ then, with each $s \in S$ we may also 
associate a local strategy $\chi_{s}$, which is locally compatible with 
the color $\V(s)$ on $m_{s}$ and such that $\eloi$'s strategy $h$ 
is aimed at satisfying the one-step formula $\nb A_{m_{s},\chi_{s}}$.

To define $\eloi$'s strategy $h'$ in $\EG$, consider an arbitrary finite 
$\EG$-match $\si$.
It is not hard to see that $\si$ admits a unique \emph{modal decomposition} $\si
= \si_{0}\cdots \si_{l}$, where for all $i < l$, $\last(\si_{i})$ is the 
unique modal position in $\si_{i}$, and $\si_{l}$ either contains no modal 
positions, or a unique one at $\last(\si_{l})$.
This means that we may present each $\si_{i}$ as $\si_{i} = 
(v^{i}_{0},s_{i})(v^{i}_{1},s_{i}) \cdots (v^{i}_{n_{i}},s_{i})$
for some fixed point $s_{i}$ in $\bbS$.
The key idea underlying the definition of $h'$ is that with every $h'$-guided 
finite match $\si$, with $\si = \si_{0}\cdots \si_{l}$ as above, we associate
an $h$-guided $\AG$-match $\Si_{\si} = (m_{0},s_{0})\cdots(m_{l},s_{l})$ 
satisfying the condition (\dag) given below:
\begin{enumerate}
\item[(\dag1)] for each $i \leq l$, $j \leq n_{i}$, we have 
$v^{i}_{j} \in \Ran(m_{i}\comp e_{\chi_{s_{i}}})$.
\item[(\dag2)] for each $i \leq l$, $m_{i} = m_{s_{i}}$, and the sequence
   $v^{i}_{0}\cdots v^{i}_{n_{i}}$ is $\chi_{s_{i}}$-guided;
   for each pair $j,k$ with $j < k \leq n_{i}$, we have 
   $(v^{i}_{j},N^{i}_{j,k},v^{i}_{k}) \in e^{-}_{\chi_{s_{i}}}$,
   where $N^{i}_{j,k} = \max\{ \Om(v^{i}_{r}) \mid j < r \leq k \}$;
\item[(\dag3)] for each $i < l$, with $w \isdef v^{i}_{n_{i}}$, 
   if $w \in V_{\dia}$, 
   then $m_{i+1} = e_{\chi_{s_{i}}} \comp d_{w}(m_{i}\comp e_{\chi_{s_{i}}})$
   and $(v^{i}_{0},M_{i},v^{i+1}_{0}) \in m_{i+1}$, 
   where $M_{i} = \max\{ \Om(v^{i}_{1}), \ldots, \Om(v^{i}_{n_{i}}), \Om(v^{i+1}_{0}) \}$.
\item[(\dag4)] for each $i < l$, with $w \isdef v^{i}_{n_{i}}$, 
   if $w \in V_{\Box}$, 
   then $m_{i+1} = e_{\chi_{s_{i}}} \comp d_{\Box}(m_{i}\comp e_{\chi_{s_{i}}})$
   and $(v^{i}_{0},M_{i},v^{i+1}_{0}) \in m_{i+1}$, 
   where $M_{i} = \max\{ \Om(v^{i}_{1}), \ldots, \Om(v^{i}_{n_{i}}), \Om(v^{i+1}_{0}) \}$.
\end{enumerate}

Based on this connection, we define the following strategy $h'$ for $\eloi$ in 
$\EG$; we show at the same time that, playing $h'$, $\eloi$ can maintain the 
condition (\dag) and wins all finite matches.
Consider a partial $h'$-guided match $\si$, modally decomposed as $\si = 
\si_{0}\cdots \si_{l}$ as above, where $\si_{l} = (v^{l}_{0},s_{l})\cdots 
(v^{l}_{k},s_{l})$, and let $\Si_{\si} = (m_{0},s_{0})\cdots(m_{l},s_{l})$
be an associated $\AG$-match satisfying (\dag).
We distinguish cases, writing $v \isdef v^{l}_{k}$ and $\chi \isdef 
\chi_{s_{l}}$ for brevity.
\begin{itemize}
\setlength{\itemsep}{0mm}
\setlength{\parsep}{0mm}
\item
If $v$ is a propositional node, we need to show that $\si$ is won by $\eloi$.
This is immediate in case $L(v) = \top$, so assume that $L(v) = \bot$ or
$L(v) \in \{ p, \ol{p} \}$ for some proposition letter $p$.
We only treat the case where $L(v) = p$, the other cases being similar.
By (\dag2) we have $m_{l} = m_{s_{l}}$, so that %by our assumptions
$\chi = \chi_{s_{l}}$ is locally compatible with the color $\V(s_{l})$
on $m_{l}\comp e_{\chi}$.
But then (\dag1) implies that $L(v) \in \V(s)$.

\item
If $v \in V_{\lor}$, define $h'(\si) \isdef (\chi(v),s_{l})$.
It is easy to see that $\si\cdot (\chi(v),s_{l})$ and $\Si_{\si}$ are 
related by (\dag).
\item
If $v \in V_{\land}$, suppose that $\abel$ picks a conjunct $u$ of $v$.
Then $\si\cdot (u,s_{l})$ and $\Si_{\si}$ are related
by (\dag).

\item
If $v \in V_{\dia}$, first define $m_{l+1} \isdef e_{\chi} \comp 
d_{v}(m_{l}\comp e_{\chi})$.
Note that, since $v \in \Ran(m_{l}\comp e_{\chi})$ by the inductive hypothesis
(\dag1), we find $m_{l+1} \in A_{m_{l},\chi}$.
Furthermore, recall that by our assumption on $h$, $\eloi$'s move at position
$(m_{l},s_{l})$ in $\AG$ is aimed at satisfying the one-step formula $\nb 
A_{m_{l},\chi}$, and so this move must contain a pair of the form 
$(m_{l+1},t)$ 
for some $t \in R[s_{l}]$.
Now define 
$h'(\si) \isdef (u,t)$, 
where $u$ is the (unique) element of $E[v]$.
The modal decomposition of $\si' \isdef \si \cdot (u,t)$ is then $\si' = 
\si_{1}\cdots\si_{m}\si_{m+1}$, where $\si_{m+1} = (u,t)$.
(That is, in the terminology of (\dag) we have $v^{l+1}_{0} = u$ and $s_{l+1}
= t$.)

We now check that $\si'$ and $\Si' \isdef \Si \cdot (m_{l+1},t)$ are related
by (\dag).
For (\dag1), it suffices to show that $u \in \Ran(m_{l+1}\comp e_{\chi_{t}})$.
But this is immediate by
$(v,\Om(u),u) \in 
d_{v}(m_{l}\comp e_{\chi}) =
\Diag_{V} \comp d_{v}(m_{l}\comp e_{\chi}) \sse 
e_{\chi} \comp d_{v}(m_{l}\comp e_{\chi}) =
m_{l+1} \sse
m_{l+1} \comp \Diag_{V} \sse
m_{l+1} \comp e_{\chi_{t}}$.
For (\dag2) it suffices to show that $m_{l+1} = m_{t}$ but this holds by
construction.
For (\dag3) we likewise have $m_{l+1} = e_{\chi}\comp d_{v}(m_{l}\comp e_{\chi})$
by construction.
We already saw that $(v,\Om(u),u) \in d_{v}(m_{l}\comp e_{\chi})$, and we have 
$(v^{l}_{0},N^{l}_{0,k},v^{l}_{k}) \in e_{\chi}^{-}$ by the induction hypothesis 
(\dag2).
From this, and the observation that $M_{l} = \max(N^{l}_{0,k},\Om(u))$ we obtain 
$(v^{l}_{0},M_{l},u) \in e_{\chi}^{-}\comp d_{v}(m_{l}\comp e_{\chi}) \sse
e_{\chi} \comp d_{v}(m_{l}\comp e_{\chi}) = m_{l+1}$.
Finally, for (\dag4) there is nothing to prove.

\item
The case where $v \in V_{\Box}$ is similar to the previous one, so we skip some
details.
Let $u \in E[v]$ be the unique successor of $v$ in $\bbG$, and suppose that
in our $\EG$-match, $\abel$ picks a successor $t$ of $s_{l}$; that is, we now 
look at the continuation $\si' \isdef \si \cdot (u,t)$ of the $\EG$-match 
$\si$.
Consider $\eloi$'s move in $\AG$ at position $(m_{l},s_{l})$, which makes the 
one-step formula $\nb A_{m_{l},\chi}$ true, and thus contains a pair $(m,t)$ 
for some $m \in A_{m_{l},\chi}$.
Now define $m_{l+1} \isdef m$, and let $\Si' \isdef (m,t)$; this is an
$h$-guided continuation of $\Si$.

To verify that $\si'$ and $\Si'$ satisfy (\dag), first note that by definition of 
the set $A_{m_{l},\chi}$, $m$ must be of the form $e_{\chi} \comp d_{x}(m_{l}
\comp e_{\chi})$, where either $x = \Box$ or $x \in \Ran(m_{l}\comp e_{\chi})$.
Based on this observation, checking the conditions (\dag1), (\dag2) and (\dag4)
are similar to the respective conditions (\dag1), (\dag2) and (\dag3) in the 
previous case (with the only difference that we now must also take the 
possibility that $x = \Box$ into account).
Finally, condition (\dag3) needs no check since it is not applicable.
\end{itemize}
%The proof that $h'$ is winning for $\eloi$ is omitted for space reasons.
To see that $h'$ is winning for $\eloi$
%we now provide the proof that strategy $h'$ that was defined
%in the second part of the proof is in fact \emph{winning} for $\eloi$. 
consider a full $h'$-guided match, and distinguish cases. For finite matches one can check
that $\exists$ never gets stuck. 
%Concerning propositional variables, this follows from
%(\dag1) and the fact that in the corresponding $h$-match the $\chi_{s_j}$'s are locally compatible
%with $\V(s_j)$ on $m_i$. At positions in $V_\dia \times S$, $\eloi$ will not get stuck as described 
%in the bullet point for this case above.
In case $\si$ is an infinite $h'$-guided match, we make a further 
distinction as to whether the number of modal positions that $\si$ passes 
through is finite or infinite. If $\si$ passes through infinitely many modal positions, there is a unique 
way of decomposing $\si$ as $\si = \si_{0}\si_{1}\cdots$, where 
$(\last(\si_{i}))_{i\in\om}$ is the sequence of (all) modal positions in $\si$.
By construction there is an associated \emph{infinite} $h$-guided 
$\AG$-match $\Si_{\si} = (m_{i},s_{i})_{i\in\om}$ related to $\si$ via the
condition (\dag).
It is now possible to prove
that $\eloi$ is the winner of $\sigma$ by using that $h$ is winning for $\eloi$ in $\AG$ (details omitted due
to space limitations).
% 
% But then it follows from condition (3) that $(v^{i}_{0},M_{i},v^{i+1}_{0})
% \in m_{i+1}$ for all $i$, where 
% \begin{equation}
% \label{eq:M}
% \text{$M_{i}$ is the highest priority met in $\si$ 
% between (and including) $(v^{i}_{1},s_{i})$ and $(v^{i+1}_{0},s_{i+1})$.}
% \end{equation}
% From this it is easy to derive that the stream $(v^{i}_{0},M_{i})_{i\in\om}$
% is a trace on $\al = (m_{i})_{i\in\om}$, and since $\Si_{s}$, being $h$-guided,
% is winning for $\eloi$, this trace $(v^{i}_{0},M_{i})_{i\in\om}$ must be good.
% That is, we have that $M_{\si} \isdef \max\{ M \in \om \mid M = M_{i} \text{ for
% infinitely many } i \}$ is even.
% But then by \eqref{eq:M} $M_{\si}$ is the highest parity met infinitely often 
% during $\si$, and so if $M_{\si}$ is even this means that $\eloi$ is the winner 
% of $\si$.
If $\si$ only passes finitely many modal positions, 
we may represent $\si = \si_{0}\cdots\si_{l}$, where each $\si_{i}$ with $i<l$ 
is finite, $\si_{l}$ is infinite, and $(\last(\si_{i}))_{i<l}$ is the sequence 
of all modal positions in $\si$.
We only consider the subcase where $l > 0$. 
%(the subcase where $l=0$ is somewhat easier).
Let $\Si_{\si} = (m_{0},s_{0})\cdots(m_{l},s_{l})$ be the $h$-guided
$\AG$-match that we have associated with $\si$ (or, to be more precise, with
the initial segments of $\si$ that are long enough to have passed the last 
modal node of $\si$).
Observe that since $\Si_{\si}$ is $h$-guided, the position $(m_{l},s_{l})$ must 
be winning for $\eloi$, and that by~(\dag2) the macrostate $m_{l}$ 
is the unique state $m$ in $A$ such that $(m_{l},s_{l})$ is met during
an $h$-guided $\AG$-match.
% Note too that $\si_{m}$ must be infinite; 
Write $\si_{m} = 
(u_{0},s)(u_{1},s)(u_{2},s)\cdots$; that is, we write $u_{j} \isdef v^{l}_{j}$
and $s \isdef s_{l}$.
The sequence $u_{0}u_{1}u_{2}\cdots$ is a 
trace on the stream $m_{l}\comp (e^{-}_{\chi_{s}})^{\om}$; but then, by 
the compatibility of $\chi_{s}$ with $m_{l}$ on $\V(s)$,
$u_{0}u_{1}u_{2}\cdots$ must be a \emph{good} trace. 
Since $(u_{0},s)(u_{1},s)\cdots$ is a tail of $\si$, this means 
that $\si$ is won by $\eloi$, as required.
\end{proof}

\begin{proofof}{Theorem~\ref{t:sim}}
The equivalence part of the disjunctive modal automaton $\bbA_{\bbG,v_I}$ to $\bbG$ was 
proved in Proposition~\ref{p:simeq}. It remains to check the sizes of the components of the automaton $\bbA$. 
But this is fairly straightforward.
To start with, from the definition of $\bbA$ we have $M_{\Om}  = \pow(V \times \Ran(\Om)
\times V)$ it immediately follows that 
$
\size{A} \leq 2^{\size{V \times \Ran(\Om) \times V}} = 2^{n^{2}k}.
$
To compute the size of $\Th$, first observe that the number of local strategies 
is equal to $2^{\size{A_{\lor}}}$, and that for each macrostate $m$, 
local strategy $\chi$ and color $c \in \pow\Prop$ we find
$\size{A_{m,\chi}} \leq \size{V_{\dia}} +1 \leq n$.
From this it is immediate that for each formula $\Th(m,c)$ we have
$
\size{\Th(m,c)} \leq 2^{\size{A_{\lor}}} \cdot (\size{V_{\dia}} +1)
\leq n 2^{n}$.
Finally, the table of $\Th$ has $\size{A} \cdot 2^{\size{\Prop}} \leq 
2^{n^{2}k} \cdot 2^l = 2^{n^{2}k + l}$ entries, so that its total size is bounded by 
$n 2^{n} \cdot 2^{n^{2}k + l} = n 2^{n^{2}k + l + n}$, as stated by the theorem.
\end{proofof}

\begin{corollary}\label{cor:parity}
    The disjunctive modal automaton $\bbA_{\bbG,v_I}$ 
    can be turned into an equivalent disjunctive parity automaton $\bbA$ with index $\Ord(n\cdot k)$ and size 
    $2^{\Ord(n^2k\cdot \log(nk))}$.
\end{corollary}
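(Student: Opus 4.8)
The automaton $\bbA_{\bbG} = (M_{\Om}, \Th, \NBT_{\Om}, m_I)$ constructed in Definition~\ref{def:simulauto} is already disjunctive, but it carries the $\om$-regular acceptance condition $\NBT_{\Om}$ rather than a parity condition, so the task is purely to replace that condition by a parity one while preserving the disjunctive shape. The plan is to form the synchronised product of $\bbA_{\bbG}$ with the \emph{deterministic} parity word automaton $\bbP = (P, p_I, \Delta_{\bbP}, \Om_{\bbP})$ supplied by Proposition~\ref{prop:parity}, which recognises $\NBT_{\Om}$, has size $2^{\Ord(nk\log(nk))}$ and index $\Ord(nk)$. Concretely, I would set $\bbA \isdef (M_{\Om} \times P, \Th', \Om', (m_I, p_I))$, declare the priority of a pair to be inherited from $\bbP$ via $\Om'(m,p) \isdef \Om_{\bbP}(p)$, and define $\Th'((m,p),c)$ to be the formula obtained from $\Th(m,c) \in \DMLone(M_{\Om})$ by leaving its boolean skeleton and its $\bot,\top$ leaves untouched while rewriting every occurrence $\nb B$ to $\nb\{ (m', \Delta_{\bbP}(p,m)) \mid m' \in B \}$.

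The first thing I would check is that $\bbA$ really is a disjunctive modal automaton: since $\bbP$ is deterministic, $\Delta_{\bbP}(p,m)$ is a single well-defined element of $P$, so the rewriting above maps $\DMLone(M_{\Om})$ into $\DMLone(M_{\Om}\times P)$ and $\Th'$ is a legitimate transition map. Correctness would then follow from a routine bookkeeping argument: any match of $\AG(\bbA,\bbS)$ projects on its first coordinates to a legitimate match of $\AG(\bbA_{\bbG},\bbS)$, and by the definition of $\Th'$ and of the initial state the $P$-coordinate at the $i$-th macrostate position of such a match is exactly $\Delta_{\bbP}^{*}(p_I, m_0\cdots m_{i-1})$, i.e.\ it tracks the unique $\bbP$-run on the stream $m_0 m_1 m_2\cdots$ of macrostates visited. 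Hence for an infinite match the largest $\Om'$-priority seen infinitely often is the largest $\Om_{\bbP}$-priority seen infinitely often along that run, which is even iff $\bbP$ accepts $m_0m_1m_2\cdots$ iff $m_0m_1m_2\cdots \in \NBT_{\Om}$ iff the underlying $\AG(\bbA_{\bbG},\bbS)$-match is won by $\eloi$; finite matches need no check, as the two games have the same positions and moves up to the extra $P$-component, and the initial positions correspond. Combining this with Proposition~\ref{p:simeq} gives $\bbA \equiv \bbA_{\bbG} \equiv \bbG$.

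For the size and index I expect no difficulty: $\idx(\bbA) = \idx(\bbP) = \Ord(nk)$, and the state space has cardinality $\size{M_{\Om}}\cdot\size{P} \leq 2^{n^{2}k}\cdot 2^{\Ord(nk\log(nk))} = 2^{\Ord(n^{2}k\log(nk))}$ (using $n^{2}k \leq n^{2}k\log(nk)$), while the transition table of $\bbA$ has at most $\size{M_{\Om}\times P}\cdot 2^{l}$ entries, each of size at most that of the corresponding $\Th(m,c)$ since the relabelling does not enlarge formulas, so by the bound from the proof of Theorem~\ref{t:sim} the whole of $\bbA$ still has size $2^{\Ord(n^{2}k\log(nk))}$. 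The one place I would watch carefully — and effectively the only subtle point — is the very first check above, that $\Th'$ stays inside $\DMLone$: this is exactly where determinism of $\bbP$ is indispensable, because a nondeterministic acceptance automaton would force a disjunction over $\bbP$-successors to sit \emph{underneath} the cover modality $\nb$, which is forbidden in a disjunctive one-step formula. Since Proposition~\ref{prop:parity} explicitly delivers a deterministic $\bbP$, this obstacle does not arise, and the corollary follows.
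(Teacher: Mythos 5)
Your construction is exactly the paper's: the paper forms the wreath (synchronised) product of $\bbA_{\bbG}$ with the deterministic parity word automaton $\bbP$ from Proposition~\ref{prop:parity}, letting the $P$-component deterministically track the stream of macrostates and inheriting priorities from $\Om_{\bbP}$, and then does the same size/index arithmetic; you merely spell out the relabelling of $\nb B$ and the correctness bookkeeping that the paper delegates to a citation. Your emphasis on determinism of $\bbP$ as the point that keeps the one-step formulas in $\DMLone$ is the right (and only) subtlety, so the proposal is correct and essentially identical in approach.
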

\begin{proof}
A standard construction, the so-called wreath product, can be used to turn the automaton $\bbA_{\bbG,v_I}$ together
with the automaton $\bbP=(P,\delta,\Om_P,p_I)$ from Proposition~\ref{prop:parity} into a parity automaton (cf.~e.g.~\cite[Definition~4.3]{kuve08:coal}). 
The transition map of the resulting
automaton $\bbA$ will have the same size as the one of $\bbA_{\bbG}$, 
the set of states is given as the product
$M_\Om \times P$ and the index of $\bbA$ is equal to the index of $\bbP$. 
Hence the parity automaton has size
$2^{\Ord(nk\cdot \log(nk))} \cdot  n 2^{n^{2}k + l + n} 
= 2^{\Ord(n^2k\cdot \log(nk))}$,
and index $\Ord(nk)$.
% 
% In order to turn
% $\bbA_{\bbG,v_I}$ into an equivalent parity automaton $\bbA$ 
% we let $\bbA$ be the wreath product of $\bbA_{\bbG,v_I}$ with the parity word automaton $\bbP=(P,\delta,\Om_P,p_I)$ from 
% Proposition~\ref{prop:parity}. The resulting disjunctove modal automaton $\bbA$ with parity condition has 
% $M_\Om \times P$ as set of states, the transition function is given by 
% $\De((m,p),c) = \Th(m,c)[m' \mapsto (m',\delta(p,m'))]$ where $\Th(m,c)[m' \mapsto (m',\delta(p,m'))]$denotes the
% formula where all occurrences of macrostates $m'$ have been replaced with $(m',\delta(p,m'))$. 
% The initial state of the automaton is $a_I= \delta(p_I,m_I)$ and the priority function
% is given by $\Om((m,p)) = \Om_P(p)$. Intuitively a run of the automaton $\bbA$ consists of a run of
% $\bbA_{\bbG,v_I}$ together with a parallel run of $\bbP$ that verifies whether or not $\bbA_{\bbG,v_I}$'s regular acceptance
% condition has been satisfied (cf~e.g.~\cite[Def.~4.3]{kuve08:coal}).
\end{proof}

We finish with the main result of our paper: there is an algorithm turning a parity formula into an equivalent 
disjunctive one in exponential size of the original formula. Due to our size preserving translations from
parity formulas to $\mu$-calculus formulas in the standard syntax, the result carries directly over to formulas in standard syntax if we measure the size of this formula in terms of its closure. 

\begin{corollary}
    For any parity formula $\bbG$ we can construct an equivalent disjunctive parity formula $\bbG^d$ with 
     $\size{\bbG^d} \leq 2^{\Ord(n^2k\cdot \log(nk))}$ and with index $\Ord(n\cdot k)$. Here $n =\size{\bbG}$ and $k$ is the index 
     of $\bbG$. % and $m = \size{\Prop}$ is the number of propositional variables in $\bbG$.
\end{corollary}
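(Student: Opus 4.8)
The plan is to obtain $\bbG^d$ simply by composing the two translations established above. First I would apply Corollary~\ref{cor:parity} to the parity formula $\bbG$, producing an equivalent \emph{disjunctive parity automaton} $\bbA$ of index $\Ord(n\cdot k)$ and size $2^{\Ord(n^2k\cdot\log(nk))}$. Since every disjunctive parity automaton is in particular a modal automaton, I can then hand $\bbA$ to the construction of Theorem~\ref{thm:modal}, which returns a strongly guarded parity formula $\bbG^d$ with $\bbG^d \equiv \bbA$, with $\size{\bbG^d} \leq 2^{\size{\Prop}}\cdot\size{\bbA}$ and with $\idx(\bbG^d) \leq \idx(\bbA)$; moreover, because $\bbA$ is disjunctive, clause~4) of Theorem~\ref{thm:modal} guarantees that $\bbG^d$ is a disjunctive parity formula in the sense of Definition~\ref{d:Dparityformula}. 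Composing the two equivalences yields $\bbG^d \equiv \bbG$.

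The remaining work is just bookkeeping of the parameters. The index bound is immediate, since $\idx(\bbG^d) \leq \idx(\bbA) = \Ord(n\cdot k)$. For the size, I would note that we may assume without loss of generality that $\Prop$ contains only the proposition letters actually occurring in $\bbG$ — and neither Corollary~\ref{cor:parity} nor Theorem~\ref{thm:modal} introduces new ones — so that $\size{\Prop} \leq n$. Hence
\[
\size{\bbG^d} \leq 2^{\size{\Prop}}\cdot\size{\bbA} \leq 2^{n}\cdot 2^{\Ord(n^2k\cdot\log(nk))} = 2^{\Ord(n^2k\cdot\log(nk))},
\]
as claimed.

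I do not expect a genuine obstacle here: both ingredients are already proved, and the argument is essentially a one-line composition followed by arithmetic on the bounds. The only point that calls for a little care is to make sure that the notion of ``disjunctive'' transfers correctly across the automaton-to-formula translation, i.e.\ that a disjunctive modal automaton is sent to a disjunctive \emph{parity formula} and not merely to an arbitrary parity formula; this is precisely what clause~4) of Theorem~\ref{thm:modal} records, so nothing further is needed. Note also that, since we work with parity formulas throughout, we do not have to pass through standard $\mu$-calculus syntax via Proposition~\ref{prop:yeswecan} at this stage; that translation is only needed to derive the statement of Theorem~\ref{thm:main} from this corollary.
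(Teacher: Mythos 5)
Your proposal is correct and matches the paper's own (one-line) argument: the corollary is obtained precisely by composing Corollary~\ref{cor:parity} with Theorem~\ref{thm:modal}, using clause~4) to preserve disjunctiveness and absorbing the $2^{\size{\Prop}}$ factor into the exponential bound. Your extra bookkeeping (noting $\size{\Prop}\leq n$ and that Proposition~\ref{prop:yeswecan} is only needed later for Theorem~\ref{thm:main}) is exactly the right way to fill in what the paper leaves implicit.
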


The corollary is an immediate consequence of Corollary~\ref{cor:parity} and Theorem~\ref{thm:modal}.
An application of Prop.~\ref{prop:yeswecan} shows that the corollary implies Thm.~\ref{thm:main}. 
%ADD STATEMENT ON DISJUNCTIVE FORMULA

\section{Conclusions}

We have presented an algorithm that constructs for a given arbitrary formula in the modal $\mu$-calculus 
an equivalent disjunctive formula with a single exponential blow-up when measuring the size of a formula 
in closure size. While the complexity of this construction is likely to be optimal, it is an interesting
question for future work whether or not the construction can be optimised to obtain ``nice'' disjunctive formulas. In particular, the move from modal automata to parity formulas potentially adds a large number of unnecessary disjuncts. Obtaining a nicer formula could be relevant for computing uniform interpolants. 
Another nagging question is the exact repercussion of our work for satisfiability checking. While satisfiability checking for disjunctive formulas is linear in subformula size, our formulas are measured in closure size, which is potentially an exponential smaller. It has to be checked whether one can use our result for ExpTime satisfiability checking when the input formula is measured in closure size.

\bibliographystyle{eptcs}
% argument is your BibTeX string definitions and bibliography database(s)
\bibliography{mu.bib,extra.bib}

%\newpage

%\section*{Omitted Proofs}

%\input{appendix}

\end{document}